\RequirePackage[l2tabu,orthodox]{nag}
\documentclass[final,a4paper,abstracton]{scrartcl}
\pdfoutput=1 

%
%
\overfullrule=1mm       
\usepackage{todonotes}
\usepackage{nth}
%
%
\usepackage[utf8]{inputenc}
\usepackage[T1]{fontenc}
\usepackage{lmodern}    

\usepackage[%
  activate={true,nocompatibility},  
  final,                
  tracking=true,        %
  kerning=true,         %
  spacing=true,         %
  ]{microtype}
\microtypecontext{spacing=nonfrench}

\usepackage{amssymb}
\usepackage{amsmath}

%
%
\usepackage[inline]{enumitem}
\setlist[itemize]{label=$\circ$}
\setlist[description]{labelindent=\parindent}
%
%
\usepackage[amsmath,hyperref,thmmarks]{ntheorem}

\theoremseparator{.}
\newtheorem{theorem}{Theorem}
\newtheorem{prop}[theorem]{Proposition}
\newtheorem{cor}[theorem]{Corollary}

\newtheorem{definition}{Definition}

\newtheorem{lemma}{Lemma}

\theoremstyle{plain}
\theorembodyfont{\upshape}
\theoremstyle{nonumberplain}
\theoremheaderfont{\itshape}
\theorembodyfont{\upshape}
\theoremsymbol{\ensuremath{_\blacksquare}}
\newtheorem{proof}{Proof}

\theoremstyle{empty}
\renewtheorem{theorem*}{Theorem}
\renewtheorem{lemma*}{Lemma}

%
%
\usepackage[pdfusetitle,pdfborder={0 0 0}]{hyperref}

\begin{document}
  
\newcommand{\RR}{\operatorname{Res}}
\newcommand{\PP}{\mathbb{P}}
\newcommand{\Z}{\mathbb{Z}}
\newcommand{\N}{\mathbb{N}}
\newcommand{\R}{\mathbb{R}}
\newcommand{\C}{\mathbb{C}}
\renewcommand{\O}[1]{{\ensuremath{\mathcal{O}(#1)}}}
\newcommand{\softO}[1]{{\ensuremath{\tilde{\mathcal{O}}}(#1)}}
\newcommand{\Res}{\operatorname{Res}}
\newcommand\restr[2]{{
  \left.\kern-\nulldelimiterspace 
  #1 
  \vphantom{\big|} 
  \right|_{#2}
  }}

\title{On the Complexity of Solving Zero-Dimensional\\ Polynomial Systems via Projection}
\author{Cornelius Brand \\ Saarland University\\ Cluster of Excellence (MMCI)\footnote{Part of this work was done while visiting the Simons Institute for the Theory of Computing} \and Michael Sagraloff \\ Max-Planck-Institut f\"ur Informatik, Saarbr\"ucken}
\maketitle
 
\begin{abstract}
Given a zero-dimensional polynomial system consisting of $n$ integer polynomials in $n$ variables, we propose a certified and complete method to compute all complex solutions of the system as well as a corresponding separating linear form $l$ with coefficients of small bit size. For computing $l$, we need to project the solutions into one dimension along $\O{n}$ distinct directions but no further algebraic manipulations. 
The solutions are then directly reconstructed from the considered projections. The first step is deterministic, whereas the second step uses randomization, thus being Las-Vegas.

The theoretical analysis of our approach shows that the overall cost for the two problems considered above is dominated by the cost of carrying out the projections. We also 
give bounds on the bit complexity of our algorithms that are exclusively stated in terms of the number of variables, the total degree and the bitsize of the input polynomials.
\end{abstract}

\section{Introduction}

Let $f_i\in\Z[x_1,\ldots,x_n]$, with $i=1,\ldots,n$, be polynomials of total degree $d_i$ and with integer coefficients of bitsize at most $\tau$, i.e., $f_i$ has \emph{magnitude} $(d_i,\tau)$. We further assume that the system 
\begin{align}\label{def:system}
f_1(x_1,\ldots,x_n)=\cdots=f_n(x_1,\ldots,x_n)=0,
\end{align}
has only finitely many solutions (also "at infinity"). 

There is an extensive literature describing numerous approaches to compute the set $\mathcal{S}$ of complex solutions of (\ref{def:system}), and any attempt to provide a comprehensive overview would go far beyond the scope of this work. Instead, we refer the reader to one of the excellent textbooks~\cite{cattani2005solving,sturmfels2002solving,cox2005using}. 
A well-studied approach based on elimination techniques such as multivariate resultants or Gr\"obner Bases first 
projects the solutions into one dimension and then recovers them from 
the projections. 
That is, given a linear form $l=\sum_{i=1}^n l_i x_i$ with integer coefficients 
$l_i$, we may ask for the image of $\mathcal{S}$ under the mapping $\pi_l:\C^n\mapsto\C$ that sends a point $(x_1,\ldots,x_n)\in\C^n$ to the value $\sum_{i=1}^n l_i x_i\in\C$.  Using elimination techniques, we can compute a univariate polynomial $E^l\in\Z[x]$, which we call an \emph{elimination polynomial along $l$}, such that the set $V(E^l)=\{z\in\C:E^l(z)=0\}$ of roots of $E^l$ 
contains the image $\mathcal{S}^l:=\pi_l(\mathcal{S})$ of $\mathcal{S}$ under $\pi_l$. When reconstructing the solutions from the roots of $E^l$, several problems 
may arise: The set $V(E^l)$ may contain projections of solutions at infinity, so that $V(E^l)\neq \mathcal{S}^l$. This can be resolved by considering a suitable change of coordinates that transforms the corresponding homogeneous polynomial system into a system with only finite solutions. What is even worse, $l$ may be \emph{non-separating} for $\mathcal{S}$, that is, there exist two solutions that map to the same point. In this case, $\pi_{l}$ does not define a bijective mapping between $\mathcal{S}$ and $V(E^l)$, and thus $\mathcal{S}$ cannot be recovered directly from $V(E^l)$. In contrast, if the linear form is known to be separating, then 
efficient methods exist (e.g.~by means of computing a univariate rational representation~\cite{RouillierMultivariate,Alonso1996}) to obtain the solutions from the projections. 

One possible way~\cite{RouillierMultivariate} of computing a \emph{separating linear form} (SLF for short) is to consider a large enough set $\mathcal{L}$ of linear forms, which is known to contain at least one SLF, and to carry out projections along each $l\in \mathcal{L}$ (i.e. we compute $E^l$ and its roots). Then, each linear form $l\in \mathcal{L}$ that maximizes the number of distinct roots of $E^l$ must be separating. For instance, the approach in~\cite{RouillierMultivariate} considers the set
\[
\mathcal{L}:=\{x_1+i\cdot x_2+\cdots+i^{n-1}x_n:0\le i\le (n-1)d^n(d^n-1)/2\},
\]
where $d$ is an upper bound on the degree of all $f_i$. Hence, we need to employ $\Omega(nd^{2n})$ projections along linear forms of bitsize $\O{n^2\log d}$ to compute an SLF, which renders the approach impractical. Our work is driven by the question whether it is possible to compute an SLF using a considerably smaller number of projections. Since two solutions might share $n-1$ coordinates, a reasonable lower bound  for the needed number of projections seems to be $n$. Here, we show that $2n-1$ projections along linear forms of bitsize $\O{n\log d}$ are sufficient, and that the cost for computing an SLF is dominated by the cost for the projections. In addition, the computed linear form has bitsize $\O{n\log d}$, thus being a factor $n$ smaller than what can be obtained with the approach above. 

The main tool underlying our approach is a fast method for the computation of a linear form $l=x+sy$, with $s\in\Z$, that is separating for a two-dimensional grid $G:=X\times Y\subset \C^2$, where $X$ and $Y$ are the sets consisting of the distinct roots of univariate integer polynomials $f$ and $g$ of magnitude $(D,L)$, respectively. In Section~\ref{sec:2dgrids}, we show how to compute such an $s\in\{1,\ldots,D^4\}$, using $\softO{D^3+D^2L}$ bit operations.
This bound is noteworthy as it matches the best bound~\cite{Mehlhorn:roots,DBLP:journals/jsc/Pan02,DBLP:journals/corr/BeckerS0Y15} known for isolating all complex roots of $f$ and $g$, and thus for computing $X$ and $Y$. Notice that using the above result, we may immediately derive the 
current record bound~\cite{DBLP:journals/jc/KobelS15,bouzidi:hal-01114767} of $\softO{d^6+d^5\tau}$ operations for computing an SLF for the solutions of a bivariate system defined by two polynomials of magnitude $(d,\tau)$.
Indeed, using resultant computation we may first project the solutions of this system on both coordinates. Then, the grid $G$ defined as the product of the roots of the two corresponding resultant polynomials (of magnitude $(d^2,\softO{d\tau})$) contains all solutions of the system, and thus an SLF for $G$ also constitutes an SLF for the solutions.  

We extend this approach to compute an SLF for the solutions of a general $n$-dimensional system as given in (\ref{def:system}): We first project the solutions on each of the coordinate axes, which yields sets $X_1$ to $X_n$ in $\C$. Then, the $n$-dimensional grid $G:=X_1\times\cdots\times X_n$ contains all solutions. However, instead of computing an SLF for $G$, we recursively compute SLFs $l'=l[i_1,\ldots,i_k]$ for the canonical embeddings of $\mathcal{S}$ into proper sub-products $X_{i_1}\times\cdots\times X_{i_k}$ of $G$ until we eventually obtain an SLF $l$ for $\mathcal{S}$. This can be achieved by means of a divide-and-conquer strategy, which uses projections along the linear forms $l'$ and our fast method for the computation of an SLF for a two-dimensional grid. 
Our method can be combined with any elimination technique that allows to carry out projections of the solutions along linear forms. The worst case bit complexity of our method is then bounded by
\begin{align}\label{bound:mainresult}
\softO{n\cdot(D^3+D^2L)+n\cdot\Pi},
\end{align}
where $\Pi$ bounds the cost of computing an elimination polynomial for (\ref{def:system}) along a linear form of bitsize $\O{n\log d}$, and $D$ and $L$ constitute bounds on the degrees and the bitsizes of the produced elimination polynomials. 
If a deterministic method is used to compute the elimination polynomials, our method is deterministic as well. 
Using the Las-Vegas algorithm from~\cite{DBLP:journals/jc/EmirisP05,DBLP:journals/jc/Storjohann05} to compute the hidden-variable resultant, we have $D\le d^n$, $L=\softO{(nd)^{n-1}(nd+\tau)}$, and\footnote{\small\label{footnote1}Here, $\omega$ denotes the exponent in the complexity of matrix multiplication. 
In the general case, where each of the considered hidden variable resultants $R(x)$ are obtained from the formula $R(x)=\det M(x)/\det S$ with a non-singular matrix $S$, the bound for $\Pi$ improves by a factor $(nd)^{n-1}$. In this case, the bound in (\ref{mainbound2}) also improves by a factor $(nd)^{n-1}$; see Section~\ref{sec:projection} for more details.} $\Pi=\softO{n^{(n-1)(\omega+1)}(d+\tau)d^{(\omega+2)n-\omega-1}}$. Then, (\ref{bound:mainresult}) writes as 
\begin{align}\label{mainbound2}
\softO{n^{(n-1)(\omega+1)+1}(nd+\tau)d^{(\omega+2)n-\omega-1}},
\end{align}
which bounds the number of bit operations that our algorithm uses in expectation. 
Indeed, within the same complexity, we can even compute $(nd)^{\O{1}}$ different SLFs for the solutions of (\ref{def:system}). With high probability, we may then choose an SLF $l$ in a certified manner such that each root of the corresponding elimination polynomial $E^l$ lifts to a solution of the system. 
Using the (intermediate) separating forms $l'=l[i_1,\ldots,i_k]$ from the computation in the first step, we can finally recover 
all solutions from the projections along $l$. The total cost for this step is also bounded by (\ref{mainbound2}).

The complexity of all steps in our algorithm, except for the computation of the elimination polynomials, is within the best known bound for the computation of the roots of the occurring elimination polynomials. Since the latter bound is suspected to be near-optimal and since any elimination based approach has to compute certain elimination polynomials of comparable magnitude as well as the roots of such polynomials at some point, there is some evidence that our method may perform near-optimal (at least for elimination approaches). 
Note that the bound in (\ref{mainbound2}) is dominated by the bound for the computation of the hidden variable resultant. In particular, for fixed $n$, the cost for the latter task (approximately) scales like $d^{(\omega+2)n}\tau$, whereas the cost for all other steps (approximately) scales like $d^{3n}\tau$. Hence, any improvement on the complexity of computing elimination polynomials yields an improvement of the bound in~(\ref{mainbound2}).

How does our 
bound compare to the complexity results stated in the literature? There has been 
extensive research \cite{Faugere1993329,lazard198177,Heintz1983239,Lakshman1991,Lazard1983,RouillierMultivariate} in the 80s and 90s showing that the computation of multivariate resultants or Gr\"obner Bases as well 
as the computation of the solutions of a zero-dimensional polynomial systems has (arithmetic) complexity 
bounded by $d^{\O{n}}$, thus being polynomial in the size of the dense input representation; see 
also~\cite{hashemi2011sharper} for a more comprehensive overview. There also exist more specific bounds~\cite{mourrain2003accelerated,Canny:1989:SSN:74540.74556,renegar1989worst} yielding an arithmetic complexity for 
computing the solutions of size approximately $\softO{d^{3n}}$. However, we are not aware of any general 
bound on the bit complexity that is comparable to ours, even not for lower-dimensional polynomial system with $3$ or $4$ variables, whereas remarkably, within the last two decades, the thorough investigation~\cite{gonzalez1996improved,Diochnos:2009:APC:1530888.1530918,DBLP:conf/issac/EmeliyanenkoS12,DBLP:journals/jc/KobelS15,bouzidi:hal-01114767} of the (bit) complexity of solving bivariate systems eventually yielded bounds (i.e. $\tilde{O}(d^6+d^5\tau))$ for the computation of an SLF and of all solutions) that are likely to be near-optimal and comparable to our result. 
The method from~\cite{DBLP:journals/jsc/ChengGG12} for solving zero-dimensional system shares some similarities with our approach. There, it is proposed to recursively compute SLFs $l_k=\sum_{i=1}^k l_i x_i$ for the "solutions" of the elimination ideals $\mathcal{I}_k:=\mathcal{I}\cap\mathbb{Q}[x_1,\ldots,x_k]$, where $k=1,\ldots,n$ and $\mathcal{I}:=(f_i)_{i=1,\ldots,n}$ is the ideal defined by the polynomials $f_i$. The crux is that this is done so that all solutions $(\xi,x_{k+1})$ of $\mathcal{I}_{k+1}$ obtained from lifting a specific solution $\xi$ of $\mathcal{I}_k$ project via $l_{k+1}$ into a small neighborhood of $l_k(\xi)$. Following this approach, the solutions of $\mathcal{I}$ can be represented as linear combinations of the roots of univariate polynomials. The method seems to perform well in practice as the actual separation bounds for the roots of the considered elimination polynomials is small compared to the worst-case. However, no complexity analysis is given, and we suspect that the method is not very well suited for a worst-case analysis as it considers the computation of elimination polynomials along linear forms of a very large bitsize (at least in theory).

\section{Preliminaries}\label{sec:projection}
We consider a zero-dimensional polynomial system 
as in (\ref{def:system}). Then, the homogenized system
\begin{align}\label{def:systemp}
F_1(x_1,\ldots,x_{n+1})=\cdots=F_n(x_1,\ldots,x_{n+1})=0,
\end{align}
with $F_i(x_1,\ldots,x_{n+1})\in\Z[x_1,\ldots,x_n]$ a homogenous polynomial of degree $d_i$ and 
\[
F_i(x_1,\ldots,x_n,1)=f_i(x_1,\ldots,x_n)
\] 
has only finitely many solutions in the complex projective $n$-space $\PP^n$. Then, B\'ezout's Theorem says that the total number of solutions in $\PP^n$ is upper bounded by $B:=d_1\cdots d_n\le d^n$.
A solution of the form $(x_1,\ldots,x_n,1)\in\PP^n$ is called \emph{finite},
whereas each solution of the form $(x_1,\ldots,x_n,0)$ is called \emph{infinite}. 
The solution $x_1=\cdots=x_{n+1}=0$ is called \emph{trivial}. Let $\mathcal{S}\subset\C^n$ be the set of all complex solutions of (\ref{def:system}).
Then, the finite solutions $(x_1,\ldots,x_n,1)$ of (\ref{def:systemp}) exactly correspond to the solutions $(x_1,\ldots,x_n)\in\mathcal{S}$ of (\ref{def:system}), whereas the solutions at infinity exactly correspond to the solutions in $\PP^{n-1}$ of the (homogeneous) system
$\bar{F}_1(x_1,\ldots,x_n)=\cdots=\bar{F}_n(x_1,\ldots,x_n)=0$, with $\bar{F}_i:=\restr{F_i}{x_{n+1} = 0}$.

We now briefly review the \emph{hidden variable} approach based on resultant computation, which allows us to project the solutions of (\ref{def:system}) on one of the coordinates; for more details, see~\cite{cox2005using,gelfand2009discriminants}. We may assume that $x_1$ is the coordinate onto which we project.
For a fixed value $x_1=\xi \in \C$, (\ref{def:system}) transforms into 
\begin{align}\label{def:systemfixed}
f_1'(x_2,\ldots,x_n)=\cdots=f_n'(x_2,\ldots,x_n)=0,
\end{align}
with $f_i':=\restr{f_i}{x_1=\xi}$ of generic\footnote{\small For finitely many $\xi$, the degree of $f_i'$ can be smaller than $d_i'$, however, for all other values of $\xi$, each $f_i'$ has degree $d_i'$, which is the degree of $f_i$ considered as a polynomial in the variables $x_2,\ldots,x_n$ with coefficients in $\Z[x_1]$.} degree $d_i'\le d_i$. Let $F_i'\in \C[x_2,\ldots,x_n,x_{n+1}]$ be the corresponding homogenized polynomial of degree $d_i'$, then 
\begin{align}\label{def:systempfixed}
F_1'(x_2,\ldots,x_{n+1})=\cdots=F_n'(x_2,\ldots,x_{n+1})=0
\end{align}
defines a system of $n$ homogeneous polynomials in $n$ variables. It is a well-known fact that there exists a homogeneous polynomial of total degree $D':=\sum_{i=1}^n \prod_{j\neq i} d_i'\le nd^{n-1}$ in the coefficients of the polynomials $F_i'$, the so-called \emph{resultant} $\RR(F_1',\ldots,F_n')$ of the polynomials $F_i'$, which vanishes if and only if the system (\ref{def:systempfixed}) has a non-trivial solution in $\PP^{n-1}$. The resultant is a factor of the determinant of an $m\times m$-matrix $M$, the so-called \emph{Macaulay matrix}, whose entries are given in terms of the coefficients of the polynomials $F_i'$; here, $N:=\sum_{i=1}^n (d_i'-1)+1<nd$ and $m=\binom{N+(n-1)}{n-1}<(nd)^{n-1}$.
Since $f_i'$ has the same coefficients as $F_i'$, one usually defines 
$\RR(f_1',\ldots,f_n') := \RR(F_1',\ldots,F_n')$.

In order to compute the projections of the solutions $\mathcal{S}$ of (\ref{def:system}) onto the first coordinate, we consider $f_i$ as elements of $\Z[x_1][x_2,\ldots,x_n]$ with coefficients in $\Z[x_1]$ of magnitude $(d_i,\tau)$. Hence, $x_1$ is treated as a constant (also "hidden variable"). The \emph{hidden variable resultant} $R^{x_1}=\RR^{x_1}(f_1,\ldots,f_n)$ is a univariate integer polynomial of degree $V$ in $x_1$, with $V\le B$, that vanishes at $x_1=\xi$ if and only if (\ref{def:systempfixed}) has a non-trivial solution in $\PP^{n-1}$. In particular, each solution $(x_1,\ldots,x_n)\in\mathcal{S}$ yields a root $x_1$ of $R^{x_1}$. Hence, the set $V(R^{x_1})$
contains the set $\mathcal{S}^{x_1}=\pi_{x_1}(\mathcal{S})$
of projections of all solutions onto the first coordinate. In general, it is wrong that each root of $R^{x_1}$ also extends to a solution of (\ref{def:system}). However, under certain assumptions, this can be ensured.

\begin{lemma}\label{lem:conditions}
Suppose that (\ref{def:systemp}) has no infinite solution and that
each $f_i$ contains a term of total degree $d_i$ that does not depend on $x_1$. Then, for all $\xi \in \C$, the specialized system (\ref{def:systempfixed}) has no infinite solution. In addition, $V(R^{x_1})=\mathcal{S}^{x_1}$.
\end{lemma}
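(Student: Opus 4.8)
The plan is to track how the top total-degree forms of the $f_i$ behave under the substitution $x_1=\xi$, and to feed this into the resultant characterisation recalled in the Preliminaries. The first thing I would establish is the identity $\restr{F_i'}{x_{n+1}=0}=\restr{\bar{F}_i}{x_1=0}$, valid for every $\xi\in\C$: the leading (highest total-degree) form of $f_i'=\restr{f_i}{x_1=\xi}$, viewed as a polynomial in $x_2,\dots,x_n$, is obtained from the top form $\bar{F}_i$ of $f_i$ (its homogeneous part of total degree $d_i$) simply by setting $x_1=0$. To see this, decompose $f_i$ into its homogeneous components $P_{i,0},\dots,P_{i,d_i}$, so $\bar{F}_i=P_{i,d_i}$; then $P_{i,j}(\xi,x_2,\dots,x_n)$ has degree at most $j$ in $x_2,\dots,x_n$, with degree-$j$ part equal to the $x_1$-free part $P_{i,j}(0,x_2,\dots,x_n)$, and taking $j=d_i$ gives the claim. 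Now the hypothesis that $f_i$ has a term of total degree $d_i$ not depending on $x_1$ reads exactly as: $\restr{\bar{F}_i}{x_1=0}$ is a nonzero form of degree $d_i$. Consequently $d_i'=d_i$, and for \emph{every} $\xi$ the polynomial $f_i'$ has total degree exactly $d_i$ with leading form $\restr{\bar{F}_i}{x_1=0}$; in particular no degree drop ever occurs, and this leading form does not depend on $\xi$.

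For the first assertion I would argue as follows. By definition, (\ref{def:systempfixed}) has a solution at infinity if and only if the forms $\restr{F_1'}{x_{n+1}=0},\dots,\restr{F_n'}{x_{n+1}=0}$ have a common zero $(x_2^*,\dots,x_n^*)\neq 0$, which by the identity above happens precisely when $\restr{\bar{F}_1}{x_1=0},\dots,\restr{\bar{F}_n}{x_1=0}$ vanish simultaneously at such a point. But then $(0,x_2^*,\dots,x_n^*)$ is a nontrivial point of $\PP^n$ annihilated by every $\bar{F}_i=\restr{F_i}{x_{n+1}=0}$, that is, an infinite solution of (\ref{def:systemp}), contradicting the first hypothesis. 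Hence (\ref{def:systempfixed}) has no solution at infinity, for every $\xi$.

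For the second assertion, the inclusion $\mathcal{S}^{x_1}\subseteq V(R^{x_1})$ is already recorded above, so I would only need the reverse. Let $\xi$ be a root of $R^{x_1}$. By the first paragraph no $f_i'$ drops in degree at $\xi$, so $R^{x_1}(\xi)$ is genuinely the resultant of the specialised homogeneous system $F_1'=\dots=F_n'=0$, and therefore vanishes if and only if that system has a nontrivial solution in $\PP^{n-1}$. By the first assertion such a solution cannot lie at infinity, hence is of the form $(x_2^*,\dots,x_n^*,1)$; equivalently $f_i(\xi,x_2^*,\dots,x_n^*)=0$ for all $i$, so $(\xi,x_2^*,\dots,x_n^*)\in\mathcal{S}$ and $\xi=\pi_{x_1}(\xi,x_2^*,\dots,x_n^*)\in\mathcal{S}^{x_1}$. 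This gives $V(R^{x_1})\subseteq\mathcal{S}^{x_1}$, hence equality; since $\mathcal{S}$ is finite, it also shows $R^{x_1}\not\equiv 0$.

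The one point that genuinely requires care is the behaviour of the resultant under specialisation of its coefficients: in general $\RR^{x_1}(f_1,\dots,f_n)$ may vanish at a value $\xi$ for which the specialised system has \emph{no} solution at all, precisely when some $f_i'$ loses degree --- equivalently, when the leading form of $f_i$ in $x_2,\dots,x_n$ vanishes at $x_1=\xi$, which manufactures a spurious component at infinity. The hypothesis on the $x_1$-free term of maximal total degree is exactly what forbids this, uniformly in $\xi$, and so is what lets one pass from the one-sided inclusion ``$V(R^{x_1})\supseteq\mathcal{S}^{x_1}$'' to the honest equivalence used in the last step. Carrying out that reduction cleanly is the crux; the rest is elementary bookkeeping with homogenisations and the defining property of the resultant.
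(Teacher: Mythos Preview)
Your proof is correct and follows essentially the same route as the paper: both establish the identity $\restr{F_i'}{x_{n+1}=0}=\restr{\bar F_i}{x_1=0}$ (equivalently $F_i'(x_2,\dots,x_n,0)=F_i(0,x_2,\dots,x_n,0)$), use it to lift any hypothetical infinite solution of the specialised system to an infinite solution $(0,x_2^*,\dots,x_n^*,0)$ of the homogenised system, and then appeal to the resultant's nontrivial-solution criterion to promote a root of $R^{x_1}$ to an affine solution of (\ref{def:system}). Your treatment is a little more explicit than the paper's about the specialisation issue (why no degree drop occurs and hence why $R^{x_1}(\xi)$ really equals the resultant of the specialised forms), but this is the same argument with extra care, not a different approach.
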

\begin{proof}
Under the given assumption, we have $d_i'=d_i$, and each polynomial $F_i'(x_2,\ldots,x_n,0)$ is exactly the sum of all terms of the form $c\cdot x_2^{i_2}\cdots x_n^{i_n}$, with $i_2+\cdots+i_n=d_i$ and $c$ a non-zero constant, as they occur in $f_i$. Hence, we have $F_i(0,x_2,\ldots,x_{n},0)=F_i'(x_2,\ldots,x_n,0)$, which shows that each infinite solution of (\ref{def:systempfixed}) extends to an infinite solution of (\ref{def:systemp}). This shows (a).

Now, let $\xi$ be a complex root of $R^{x_1}$. Then, there exists a non-trivial solution in $\PP^{n-1}$ of the specialized system (\ref{def:systempfixed}). From (a), we conclude that (\ref{def:systempfixed}) has no solution at infinity, and thus there exists a solution $(\xi_2,\ldots,\xi_n)\in\C^{n-1}$ of the specialized affine system (\ref{def:systemfixed}). We conclude that $(\xi,\xi_2,\ldots,\xi_n)$ is a solution of (\ref{def:system}). 
\end{proof}

In general, $R^{x_1}$ can be written as $\det M(x_1)/\det S$, where $M(x_1)$ is the Macaulay matrix with entries in $\Z[x_1]$ and $S$ a non-singular square sub-matrix of $M(x_1)$ that does not depend on $x_1$. In the special case, where $S$ is singular, we may use Canny's approach~\cite{DBLP:journals/jsc/Canny90},\cite[\S 4]{cox2005using} 
(known as \emph{Generalized Characteristic Polynomial}) to compute $R^{x_1}$ as the quotient of the trailing coefficients of the (non-zero) characteristic 
polynomials of the matrices $M(x_1)$ and $S$. 
From the bounds on $m$ and $V$, it thus follows that $R^{x_1}$ is of magnitude $(B,\softO{(nd)^{n-1}\tau})$. 
Emiris and Pan~\cite{DBLP:journals/jc/EmirisP05} give a Las-Vegas algorithm to compute
$R^{x_1}$. The main idea underlying their approach is to compute the value of $R^{x_1}$ at $V=\O{B}$ many distinct integer points $x_1=\xi\in\Z$ (each of bit size $\mu=\O{\log B}$), and then to interpolate
$R^{x_1}$ from these values. For computing
$R^{x_1}(\xi) \in \Z$, one evaluates the determinants of $M(\xi)$ and $S$ modulo $p$ for a sufficiently large set of primes (of near-constant bitsize) followed by a Chinese Remaindering step to recover $R^{x_1}(\xi)$. Exploiting that $M$ is quasi-Toeplitz,
the determinants can be computed with $\O{m^2}$ arithmetic operations, which yields the bound $\softO{m^2 n V D' (d+\tau)}=\softO{(d+\tau)n^{2n}d^{4n-3}}$ on the expected costs of computing $R^{x_1}$.
There also exist more adaptive bounds (e.g.~\cite{DBLP:conf/issac/EmirisMT10,DBLP:journals/jc/EmirisP05}) for the magnitude as well as for the complexity of computing the (sparse) resultant that take into account the actual support of the coefficients of the input polynomials (e.g. the \emph{mixed volume}). So for sparse systems, the above bounds constitute significant overestimations. When focusing on general systems, a slightly better bound (with respect to the exponent of $d$) can be derived: Using an asymptotically fast Las-Vegas method~\cite{DBLP:journals/jc/Storjohann05} to compute the determinant of an $m\times m$ matrix with integer entries of bitsize $L$ at expected cost $\softO{m^{\omega}L}$, we obtain the following.

\begin{prop}\label{complexity:resultant}
There is a Las-Vegas algorithm to compute $\det M(x_1)$ and $\det(S)$ in an expected number of 
$$\softO{m^{\omega}(d+\tau)B}=\softO{n^{(n-1)\omega}(d+\tau)d^{(\omega+1)n-\omega}}$$
bit operations. 
If $\det(S)\neq 0$, $\RR^{x_1}_{d_1,\ldots,d_n}(f_1,\ldots,f_n)$ can be computed within the same complexity. Otherwise,
it can be computed in an expected number of bit operations bounded by 
$$\softO{m^{\omega+1}(d+\tau)B}=\softO{n^{(n-1)(\omega+1)}(d+\tau)d^{(\omega+2)n-\omega-1}}.$$
\end{prop}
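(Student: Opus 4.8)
The plan is to reduce all three tasks to the evaluation of $\O{B}$ determinants of $m\times m$ integer matrices, each computed with the Las--Vegas algorithm of~\cite{DBLP:journals/jc/Storjohann05} at expected cost $\softO{m^{\omega}L}$ for entries of bit size $L$; the polynomial $\det M(x_1)$ and, when $\det S\neq 0$, the resultant $R^{x_1}$ are then assembled by interpolation in $x_1$, whereas the case $\det S=0$ is handled through Canny's Generalized Characteristic Polynomial~\cite{DBLP:journals/jsc/Canny90}, at the price of one further round of interpolation and hence of a factor $m$.

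First I would compute $\det S$: this is a single determinant of an $m\times m$ integer matrix with entries of bit size at most $\tau$, so~\cite{DBLP:journals/jc/Storjohann05} delivers it in expected $\softO{m^{\omega}\tau}$ bit operations, which lies within the stated bound. Assume now $\det S\neq 0$. By the identity $R^{x_1}=\det M(x_1)/\det S$ with $\det S$ a nonzero constant, $\det M(x_1)$ is a polynomial in $\Z[x_1]$ of degree $V\le B$, hence is determined by its values at $V+1$ distinct integer points $\xi_0,\ldots,\xi_V$, which we may take of bit size $\mu=\O{\log B}$. For each such $\xi$ the entries of $M(\xi)$ are integers of bit size $\O{d\mu+\tau}=\softO{d+\tau}$ --- the factor $\log B=\O{n\log d}$ is polylogarithmic in the input size and is absorbed by the soft-Oh --- and, since $M$ is quasi-Toeplitz, $M(\xi)$ is assembled from only $\O{m}$ entry evaluations, which is of lower order than the ensuing determinant. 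Applying~\cite{DBLP:journals/jc/Storjohann05} to each of the $\O{B}$ matrices $M(\xi)$ therefore costs $\softO{B\cdot m^{\omega}(d+\tau)}=\softO{m^{\omega}(d+\tau)B}$ expected bit operations, and fast univariate interpolation recovers $\det M(x_1)$, whose coefficients have bit size $\softO{(nd)^{n-1}\tau}$, from the pairs $(\xi_j,\det M(\xi_j))$ within the same bound. Since $R^{x_1}\in\Z[x_1]$, we have $\det S\mid\det M(\xi_j)$ for every integer $\xi_j$, so dividing the values and interpolating once more produces $R^{x_1}$, still within $\softO{m^{\omega}(d+\tau)B}$. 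Using $m<(nd)^{n-1}$ and $B\le d^{n}$ gives the form $\softO{n^{(n-1)\omega}(d+\tau)d^{(\omega+1)n-\omega}}$.

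If $\det S=0$, I would instead invoke the Generalized Characteristic Polynomial: following Canny's construction, a perturbation of the $f_i$ reduces the computation of $R^{x_1}$ to that of the characteristic polynomials $\det(sI_m-M(x_1))\in\Z[x_1][s]$ and $\det(sI-S)\in\Z[s]$, from which $R^{x_1}$ is read off as the quotient of their trailing $s$-coefficients. Both are produced by the same evaluation--interpolation scheme, now carrying the extra variable $s$: for each of $\O{B}$ integer values $\xi$ of $x_1$ and each of $m+1$ values of $s$ of bit size $\softO{1}$, we evaluate the integer determinant $\det(sI_m-M(\xi))$, whose entries are again of bit size $\softO{d+\tau}$, in expected $\softO{m^{\omega}(d+\tau)}$ bit operations apiece. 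Interpolating in $s$, extracting the trailing coefficient (a polynomial in $x_1$ of degree at most $B$, as its companion $\det(sI-S)$ is free of $x_1$), interpolating in $x_1$, and running the analogous but cheaper computation for the constant matrix $S$ are all of lower order. The total is $\softO{Bm\cdot m^{\omega}(d+\tau)}=\softO{m^{\omega+1}(d+\tau)B}=\softO{n^{(n-1)(\omega+1)}(d+\tau)d^{(\omega+2)n-\omega-1}}$. Since only the determinant routine is randomized while matrix evaluation, interpolation, exact division and coefficient extraction are deterministic, the whole procedure is Las Vegas.

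I expect the main obstacle to be the bookkeeping rather than any single step: one must check that every auxiliary operation stays within the determinant bound --- the quasi-Toeplitz structure of $M$ being what makes the $\O{B}$ matrix evaluations cheap enough --- and, above all, that the soft-Oh only ever hides genuinely polylogarithmic factors. The two delicate points are the bit size $\O{\log B}=\O{n\log d}$ of the interpolation nodes, which propagates into the entry bit size and must be verified harmless, and the degree bound $\deg_{x_1}\det M(x_1)=V\le B$ in the case $\det S\neq 0$, which is exactly what keeps the number of evaluation points at $\O{B}$ rather than $\Theta(md)$.
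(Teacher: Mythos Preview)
Your proposal is correct and follows essentially the same approach as the paper: evaluate $\det M$ at $\O{B}$ integer specializations of $x_1$ using Storjohann's Las--Vegas determinant algorithm, interpolate, and in the degenerate case $\det S=0$ pass to the characteristic polynomials $\phi_{M(x_1)}(t)$ and $\phi_S(t)$ via an additional round of evaluation--interpolation in $t$ (respectively $s$), incurring the extra factor~$m$. The paper's proof is terser but structurally identical; your added remarks on the degree bound $\deg_{x_1}\det M(x_1)\le B$ when $\det S\neq 0$ and on the absorption of the $\log B$ factor into the soft-Oh make explicit points the paper leaves implicit, and your invocation of the quasi-Toeplitz structure for assembling $M(\xi)$ is harmless but unnecessary here, since even the naive $m^2$ entry evaluations are already dominated by the $m^{\omega}$ determinant cost.
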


\begin{proof}
We essentially keep the algorithm from~\cite[Corollary~6.2]{DBLP:journals/jc/EmirisP05} as described above.
That is, we compute the value of $R^{x_1}$ at $\O{B}$ many distinct integer points $x_1=\xi$ of bit size $\O{\log B}$ using determinant computation followed by an interpolation step to recover $R^{x_1}$. 
However, for the determinant computation, we use an asymptotically fast method due to Storjohann~\cite{DBLP:journals/jc/Storjohann05}. The entries of the matrices $M$ and $S$, after specializing $x_1$ to $\xi$, have bit size $\O{\tau+d\cdot\log B}$. Their determinants can be computed (using a Las Vegas algorithm) with $\softO{m^{\omega}(\tau+d\log B)}$ bit operations, where $\omega$ denotes the exponent in the arithmetic complexity of matrix multiplication; recent work~\cite{LeGall:2014:PTF:2608628.2608664} shows that $2\le \omega< 2.3729$. Since we have to carry out these computations for $\O{B}$ distinct values of $x_1$, the claimed bound on the complexity of computing $M(x_1)$ and $S$ follows. If $\det(S)\neq 0$, then $\RR^{x_1}_{d_1,\ldots,d_n}(f_1,\ldots,f_n)=\det M(x_1)/\det(S)$, and we are done. If $\det(S)=0$, then we need to compute the characteristic polynomials $\phi_{M(x_1)}(t)\in\Z[x_1][t]$ and $\phi_S(t)\in\Z[t]$ of $M(x_1)$ and $S$, respectively. For this, we may again consider an interpolation/evaluation approach, where we reduce the computation of the polynomials $\phi_{M(x_1)}(t)$ and $\phi_S(t)$ to the computation of their values at $m$ distinct interpolation points $t=t_i\in\Z$ of small bitsize. This yields an additional factor of size $m=O((nd)^{n-1})$ in the complexity bound.
\end{proof}

Once $R^{x_1}$ is computed, we can use a fast univariate root finder~\cite{Mehlhorn:roots,DBLP:journals/jsc/Pan02,DBLP:journals/corr/BeckerS0Y15} to compute arbitrary small isolating disks for all complex roots of $R^{x_1}$. 

\begin{prop}[Thms. 4 and 5 of \cite{Mehlhorn:roots}] \label{prop:root_comp}
Let $f \in \Z[x]$ be a polynomial of magnitude $(d,\tau)$, and let $\rho$ be an arbitrary positive integer. Then, using $\softO{d^3 + d^2\tau + d\rho}$ bit operations, we can compute a sorted list of isolating disks, each of radius less than $2^{-\rho}$, for all complex roots of $f$.
\end{prop}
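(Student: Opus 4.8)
Since Proposition~\ref{prop:root_comp} is quoted verbatim from Theorems~4 and~5 of~\cite{Mehlhorn:roots}, the plan is to follow the architecture of that proof, which I now sketch rather than reproduce. I would first reduce to the squarefree case by replacing $f$ with $f/\gcd(f,f')$; this costs $\softO{d^2\tau}$ bit operations, does not increase the degree, keeps the coefficient bitsize in $\O{d+\tau}$ by Mignotte's factor bound, and hence does not affect the claimed complexity, while afterwards all roots $z_1,\ldots,z_d$ are simple. Write $\sigma_i$ for the distance from $z_i$ to the nearest other root. The two classical facts I would lean on are Cauchy's root bound, which places all roots in the disk of radius $2^{\tau+1}$ about the origin, and the Davenport--Mahler--Mignotte bound $\sum_i\log(1/\sigma_i)=\O{d(\log d+\tau)}$, which says that \emph{on average} a root is only $\O{\log d+\tau}$ bits away from the rest of the spectrum.

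For the isolation phase I would run a quadtree subdivision of that initial disk. Each box is probed by a soft counting predicate --- a numerical incarnation of Pellet's theorem, evaluated from approximations of $f$ and $f'$ on a circle surrounding the box at an adaptively chosen precision --- which either certifies the box to be root-free (and discards it) or certifies that it contains a known number $k$ of roots and is well separated from the remaining roots (and outputs it, declaring it isolating when $k=1$). Plain bisection would build deep subtrees wherever several roots nearly coincide, so I would interleave Newton-type acceleration steps: a box trapping a tight cluster of $k$ roots gets contracted quadratically around a Newton estimate of the cluster's centroid. Two things then have to be established: first, that the separation bound keeps the subdivision tree to $\softO{d}$ nodes of bounded depth, which is precisely what makes the Newton acceleration terminate; and second, that the total bit cost of all predicate evaluations --- each costing $\softO{d}$ arithmetic operations at a precision dictated by $\tau$, by $\log d$ factors, and by how small $f$ gets near the box --- sums, via the Davenport--Mahler bound, to $\softO{d^3+d^2\tau}$.

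For the refinement phase, once every root sits in an isolating disk well separated from the others, I would run a single quadratically convergent simultaneous iteration --- Pan's approximate factorization aimed at precision $\rho$, equivalently a Weierstrass/Newton correction of each approximate linear factor --- driving every disk below radius $2^{-\rho}$. Because the working precision of that iteration only has to climb geometrically from $\O{\log d+\tau}$ up to $\O{\rho}$, its cost is $\softO{d\rho}$; adding it to the isolation cost yields the stated bound, and the final sorting of the $d$ disks is negligible.

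I expect the genuine difficulty to be the amortized bit-complexity analysis of the isolation phase: proving that the Newton acceleration provably terminates, so that the tree stays of size $\softO{d}$, while simultaneously keeping the working precision under control, requires combining the Davenport--Mahler bound with a careful potential-function argument over the subdivision tree. That is the technical heart of~\cite{Mehlhorn:roots}; everything else reduces to bookkeeping with root bounds and (sub)quadratic univariate arithmetic.
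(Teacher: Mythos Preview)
The paper does not prove Proposition~\ref{prop:root_comp} at all: it is stated as a direct quotation of Theorems~4 and~5 of~\cite{Mehlhorn:roots} and used as a black box, with no accompanying argument. Your sketch is therefore not competing with any proof in this paper; it is a reasonable high-level outline of the strategy in the cited reference (squarefree reduction, subdivision with soft Pellet-type predicates, Newton acceleration for clusters, amortization via Davenport--Mahler--Mignotte, and a separate quadratically convergent refinement phase contributing the $\softO{d\rho}$ term), and you correctly identify the amortized bit-complexity analysis of the isolation phase as the nontrivial part.
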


To generalize this to projecting solutions along arbitrary directions,
let $l:=l_1\cdot x_1+\cdots+l_n\cdot x_n$ be a linear form with integers $l_i$ of bit size less 
than $\mu$, and let 
$\pi_l:\C^n \rightarrow \C$  be the corresponding mapping.
We say that $\mathcal{S}^l=\pi_l(\mathcal{S})$ is the 
\emph{projection (of the solutions $\mathcal{S}$) along $l$}. 
Suppose that $l_1=1$, then, for computing $\mathcal{S}^l$, we first replace $x_1$ by 
$x_1-l_2 x_2-\cdots-l_n x_n$, yielding
\begin{align}\label{def:systemtr}
f_1^*(x_1,\ldots,x_n)=\cdots=f_n^*(x_1,\ldots,x_n)=0,
\end{align}
with $f_i^*:=f_i(x_1-l_2 x_2-\cdots-l_n x_n,x_2,\ldots,x_n)$. Then, each $f_i$ is an integer polynomial of magnitude 
$(d,\O{d\mu+\tau})$. Let $R^{l}:=\RR^{x_1}(f_1^*,\ldots,f_n^*)$. 
Then, $\mathcal{S}^{l}\subset V(R^l)$. This crucial property of $R^l$ deserves the following definition:

\begin{definition}\label{eliminationpolynomial}
Let $l$ be a linear form as above, then we call $R\in\Z[x]$ an \emph{elimination polynomial for (\ref{def:system}) along $l$} if $\deg R\le B$ and $\mathcal{S}^l\subset V(R)$. We call $R$  \emph{strong} 
if $\mathcal{S}^l= V(R)$. 
\end{definition}

Notice that $R^l$ is a strong elimination polynomial for (\ref{def:system}) along $l$ if both conditions from Lemma \ref{lem:conditions} are fulfilled for the transformed system (\ref{def:systemtr}). Lemma~\ref{lem:direction} shows that, in the case where the linear form
\begin{align*}
&l(\lambda)=x_1+l_2(\lambda) x_2+\cdots+l_n(\lambda) x_n,\text{ with}\\
&l_j:=a_{j0}+a_{j1}\cdot \lambda\in\Z[\lambda]\text{ and }a_{k1}\neq 0\text{ for some }k\in\{2,\ldots,n\}, 
\end{align*}
depends on a parameter $\lambda$, we can always choose $\lambda$ such that (\ref{def:systemtr}) fulfills the second requirement from Lemma~\ref{lem:conditions}.

\begin{lemma}\label{lem:direction}
Let $l(\lambda)$ be a linear form as above, let $\Lambda\subset\Z$ be an arbitrary set of size $|\Lambda|\ge 2nd$, and let $\mu$ be a bound on the bitsize of all $l_j$ and the integers contained in $\Lambda$. 

There exists a Las Vegas algorithm with expected bit complexity $\tilde{O}(d^3(d\mu+\tau)(2d)^{n})$ that
computes an integer $\lambda^*\in\Lambda$ as well as the transformed polynomials $f_i^*=f_i(x_1-l_2(\lambda^*) x_2-\cdots-l_n(\lambda^*) x_n,x_2,\ldots,x_n)$, such that each $f_i^*$ contains a term of degree $d_i$ that does not depend on $x_1$.
\end{lemma}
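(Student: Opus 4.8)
The plan is to reduce the required property of $f_i^*$ to the non-vanishing of a single auxiliary polynomial in $\lambda$, to show that only few $\lambda\in\Lambda$ violate it, and then to obtain $\lambda^*$ by random sampling followed by a cheap verification; the verification is what makes the procedure Las Vegas. First I would pass to leading forms. Let $h_i$ denote the homogeneous part of $f_i$ of degree $d_i$; since $\deg f_i=d_i$ we have $h_i\neq 0$. The substitution $x_1\mapsto x_1-l_2(\lambda)x_2-\cdots-l_n(\lambda)x_n$ (fixing $x_2,\ldots,x_n$) is a degree-preserving linear automorphism of $\C[x_1,\ldots,x_n]$, so the degree-$d_i$ part of $f_i^*$ equals $h_i(x_1-L(\lambda),x_2,\ldots,x_n)$, where $L(\lambda):=\sum_{j=2}^n l_j(\lambda)x_j$ does not involve $x_1$. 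Consequently $f_i^*$ contains a term of degree $d_i$ that does not depend on $x_1$ if and only if
\[
\phi_i(\lambda)\;:=\;h_i\bigl(-L(\lambda),\,x_2,\ldots,x_n\bigr)\ \in\ \Z[\lambda][x_2,\ldots,x_n]
\]
is not identically zero as a polynomial in $x_2,\ldots,x_n$.

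The crux is to bound the set of $\lambda$ for which $\phi_i(\lambda)=0$. Write $h_i=\sum_{j=0}^{e}x_1^{\,j}g_j$ with $g_j\in\Z[x_2,\ldots,x_n]$ homogeneous of degree $d_i-j$, where $e:=\deg_{x_1}h_i\le d_i$ and $g_e\neq 0$. Substituting $x_1=-L(\lambda)$ and collecting powers of $\lambda$ shows that $\phi_i$ has $\lambda$-degree at most $e$ and that its coefficient of $\lambda^e$ equals $\bigl(-\sum_{k=2}^n a_{k1}x_k\bigr)^{e}g_e$. By hypothesis $a_{k1}\neq 0$ for some $k\in\{2,\ldots,n\}$, so $\sum_{k=2}^n a_{k1}x_k$ is a nonzero linear form, and since $\Z[x_2,\ldots,x_n]$ is an integral domain the product $\bigl(-\sum_{k=2}^n a_{k1}x_k\bigr)^{e}g_e$ is nonzero. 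Picking a monomial that occurs in it with nonzero coefficient, the coefficient of that monomial in $\phi_i$ is a nonzero element of $\Z[\lambda]$ of degree exactly $e\le d_i$, hence it has at most $d_i$ integer roots. As $\phi_i(\lambda)=0$ forces this coefficient polynomial to vanish at $\lambda$, at most $d_i$ integers are ``bad'' for the index $i$, so at most $\sum_{i=1}^n d_i\le nd$ elements of $\Lambda$ are bad for some index; since $|\Lambda|\ge 2nd$, at least $|\Lambda|-nd\ge|\Lambda|/2$ of them are good for all indices simultaneously.

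The algorithm is now immediate: draw $\lambda^*$ uniformly at random from $\Lambda$, form $f_1^*,\ldots,f_n^*$ via $x_1\mapsto x_1-l_2(\lambda^*)x_2-\cdots-l_n(\lambda^*)x_n$, and test whether each $f_i^*$ has a term of total degree $d_i$ free of $x_1$ --- which only requires inspecting the finitely many degree-$d_i$ coefficients of $f_i^*$ --- returning $(\lambda^*,f_1^*,\ldots,f_n^*)$ at the first success and resampling on failure. By the previous paragraph each round succeeds with probability at least $1/2$, so the output is always correct and the expected number of rounds is at most two; in particular at least one good $\lambda^*$ exists in $\Lambda$, so the procedure terminates. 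Note that this is exactly the second hypothesis of Lemma~\ref{lem:conditions} for the transformed system~(\ref{def:systemtr}).

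It remains to bound the cost of one round, which is dominated by forming the $f_i^*$, the test being negligible. Each $f_i$ is dense of total degree at most $d$ in $n$ variables, so it has at most $\binom{n+d}{n}\le(2d)^n$ monomials, and $l_j(\lambda^*)=a_{j0}+a_{j1}\lambda^*$ has bitsize $\O{\mu}$, so the coefficients of $f_i^*$ have bitsize $\softO{d\mu+\tau}$. The substitution can be carried out by applying the elementary shears $x_1\mapsto x_1-l_j(\lambda^*)x_j$ for $j=2,\ldots,n$ one after another, each being a Taylor-type shift of $f_i$ regarded as a univariate polynomial in $x_1$ over $\Z[x_2,\ldots,x_n]$ whose shift is a monomial, so that multiplying by its powers is a mere exponent translation plus a scalar multiplication; keeping track of the monomial count and of the coefficient growth, this stays within $\softO{d^3(d\mu+\tau)(2d)^n}$ bit operations, and multiplying by the $\O{1}$ expected number of rounds does not change the bound. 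I expect the non-vanishing of $\phi_i$ to be the only real obstacle --- the point being to isolate a coefficient (here the leading one in $\lambda$) whose non-vanishing is easy to certify and directly yields the degree bound on the bad set; the reduction to leading forms is formal, and the complexity estimate is routine bookkeeping once the monomial count and the coefficient growth under the substitution are controlled.
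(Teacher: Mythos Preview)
Your proposal is correct and matches the paper's approach essentially line for line: both isolate the top $\lambda$-coefficient of the degree-$d_i$ part of $f_i^*$ that is free of $x_1$, identify it as the nonzero product $\bigl(-\sum_k a_{k1}x_k\bigr)^{e}\cdot g_{e}$ (with $e=\deg_{x_1}h_i$), deduce that at most $d_i\le d$ values of $\lambda$ are bad per index and hence at most $nd\le|\Lambda|/2$ in total, and then sample and verify. The only cosmetic difference is in the cost analysis of the substitution---the paper first expands all powers $(x_1-\sum_j l_j x_j)^k$ via Kronecker substitution and then multiplies, whereas you compose $n-1$ monomial Taylor shifts---but both routes land on the stated $\softO{d^3(d\mu+\tau)(2d)^n}$ bound.
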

\begin{proof}
We first prove that at least half of the values in $\Lambda$ yield polynomials $f_i^*$ with the desired property. Let $c_{\bar{\alpha}}\cdot \mathbf{x}^{\bar{\alpha}}$, with $\bar{\alpha}=(\bar{\alpha}_1,\ldots,\bar{\alpha}_n)$, be any term of $f_i$ of total degree $d_i$ that maximizes the degree of $x_1$. We aim to show that, except for at most $\bar{\alpha}_1$ many values of $\lambda$, the polynomial $f_i(x_1-l_2(\lambda)x_2+-\cdots-l_n(\lambda) x_n,x_2,\ldots,x_n)$ contains a term of total degree $d_i$ that is not divisible by $x_1$.
We can write 
\[
f_i=\sum_{j=0}^{d_i} x_1^j\sum_{\alpha'=(\alpha_2,\ldots,\alpha_n):|\alpha'|\le d_i-j}c_{\alpha'}\cdot x_2^{\alpha_2}\cdots x_n^{\alpha_n}
\] 
with constants $c_{\alpha'}$. The coordinate transformation $x_1\mapsto x_1-l_2(\lambda) x_2-\cdots-l_n(\lambda) x_n$ then yields
\[
\sum_{j=0}^{d_i} (x_1-l_2(\lambda) x_2-\cdots-l_n(\lambda) x_n)^j\sum_{\alpha':|\alpha'|\le d_i-j}c_{\alpha'}\cdot x_2^{\alpha_2}\cdots x_n^{\alpha_n},
\] 
and if we restrict to all terms of total degree $d_i$ (in $x_1$ to $x_n$), we obtain 
\begin{align}\label{degreedi}
\sum_{j=0}^{\bar{\alpha}_1} (x_1-l_2(\lambda) x_2-\cdots-l_n(\lambda) x_n)^j\sum_{\alpha':|\alpha'|= d_i-j}c_{\alpha'}\cdot x_2^{\alpha_2}\cdots x_n^{\alpha_n}.
\end{align}
Notice that we only have to sum over all $j$ from $1$ to $\bar{\alpha}_1$ as all other terms must be of total degree less than $d_i$ due to the definition of $\bar{\alpha}_1$. 
Considering the above sum as a polynomial in $x_1$ to $x_n$ with polynomial coefficients in $\lambda$, we can further restrict to those
terms whose coefficient is divisible by $\lambda^{\bar{\alpha}_1}$. This yields
\[
\lambda^{\bar{\alpha}_1}\cdot(-a_{21}x_2-\cdots-a_{n1} x_n)^{\bar{\alpha}_1}\cdot\sum_{\alpha':|\alpha'|= d_i-\bar{\alpha}_1}c_{\alpha'}\cdot x_2^{\alpha_2}\cdots x_n^{\alpha_n},
\] 
which is not identical to zero as $a_{k1}\neq 0$ and $c_{\alpha'}\neq 0$ for $\alpha'=(\bar{\alpha}_2,\ldots,\bar{\alpha}_n)$. From this, we conclude that
there exists at least one term in (\ref{degreedi}) whose coefficient is a non-zero polynomial of degree $\bar{\alpha}_1$ in $\lambda$.
Hence, there exist at most $\alpha_1$ values for $\lambda$ such that $f_i$ does not contain a term of degree $d_i$ that is not divisible by $x_1$. If we apply the same argument to each polynomial  
$f_i$, our first claim follows.

From the above considerations, we conclude that, by choosing a random value from $\Lambda$ yields, with probability at least $1/2$, polynomials $f_i^*$ with the desired properties. Suppose that, for some $\lambda^*$, the polynomials $f_i^*$ are already computed, then we can search for a term in each $f_i^*$ of total degree $d_i$ that is not divisible by $x_1$ for the cost of reading $f_i^*$, which is $\softO{(n+d_i)^n\cdot d_i\mu}$ as $f_i^*$ is a polynomial of magnitude $(d_i,\softO{d\mu+\tau})$. 
It remains to bound the cost for computing the polynomials $f_i^*$. Using Kronecker substitution~(see for example \cite{vzGG13}), we can compute the product of two $n$-variate integer polynomials of magnitude $(d,\tau)$ in $\softO{(d+\tau)(2d)^n}$ bit operations. Hence, can compute all powers $(x_1-l_2 x_2-\cdots-l_n x_n)^j$, for $j=0,\ldots,d$, in $\softO{d(d\mu+\tau)(2d)^n}$ bit operations. Thus, computing all polynomials $f_i^*$ needs $\softO{d^3(d\mu+\tau)(2d)^{n}}$ bit operations. 
\end{proof}

Putting everything together, we obtain the following result:

\begin{cor}\label{cor:main}
Let $l(\lambda)$ and $\Lambda$ be defined as in Lemma~\ref{lem:direction}.
\begin{itemize}
\item[(a)] There is a Las-Vegas algorithm that computes $R^{l(\lambda)}$ in an expected number of bit operations that is bounded by 
$\softO{\Pi(d,\tau,n,\mu)}$, 
where we define $$\Pi(d,\tau,n,\mu):=n^{(n-1)\omega}(d\mu+\tau)d^{(\omega+2)n-\omega-1}.$$
\item[(b)] \label{cor:main:resrootcomp}
Suppose that $R^{l(\lambda)}$ is given. Then, for any $\lambda\in\Lambda$ and $\rho \in \N$, we can compute isolating disks of size less than $2^{-\rho}$ for all roots of $R^{l(\lambda)}$ in 
\begin{align}\label{costforprojection}
\softO{n^{n-1}d^{3n-1}(d\mu+\tau)+d^n\rho}
\end{align}
bit operations.
\item[(c)] \label{cor:main:resrootbound}
For each  $(x_1,\ldots,x_n)\in\mathcal{S}$, we have $2^{-\Gamma}<\max_i |x_i|<2^{\Gamma}$ with $\Gamma:=\max_{i}\log (1+\|R^{x_i}\|_\infty))=\softO{(nd)^{n-1}\tau}$.  
\end{itemize}
\end{cor}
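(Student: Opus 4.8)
The three claims all amount to running the transformed system produced by Lemma~\ref{lem:direction} through the resultant algorithm of Proposition~\ref{complexity:resultant} and the root isolator of Proposition~\ref{prop:root_comp}; the only substantial work is keeping track of magnitudes, and everything reduces to one estimate. Namely, after the shift $x_1\mapsto x_1-l_2(\lambda)x_2-\cdots-l_n(\lambda)x_n$ the hidden-variable resultant $R^{l(\lambda)}=\RR^{x_1}(f_1^*,\ldots,f_n^*)$ has magnitude $(d^n,\softO{(nd)^{n-1}(d\mu+\tau)})$: by Lemma~\ref{lem:direction} each $f_i^*$ has magnitude $(d,\O{d\mu+\tau})$, and then, exactly as in the discussion preceding Proposition~\ref{complexity:resultant}, the bound follows from $m<(nd)^{n-1}$ and from $R^{l(\lambda)}$ having degree $V\le B\le d^n$ in $x_1$.

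For \textbf{(a)} the plan is: run Lemma~\ref{lem:direction} to obtain $\lambda^*\in\Lambda$ and the $f_i^*$ in expected $\softO{d^3(d\mu+\tau)(2d)^n}$ bit operations, then run the Las-Vegas algorithm of Proposition~\ref{complexity:resultant} on $f_1^*,\ldots,f_n^*$; substituting the bitsize $\O{d\mu+\tau}$ for $\tau$ there and using $m<(nd)^{n-1}$, $B\le d^n$ turns its bound into $\softO{\Pi(d,\tau,n,\mu)}$. It then remains to observe that the cost of the coordinate transformation does not dominate: dividing both bounds by $(d\mu+\tau)$, we need $d^3(2d)^n\le n^{(n-1)\omega}d^{(\omega+2)n-\omega-1}$, and for $n\ge2$ and $\omega\ge2$ both $2^n\le n^{(n-1)\omega}$ and $n+3\le(\omega+2)n-\omega-1$ hold.

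For \textbf{(b)} I would apply Proposition~\ref{prop:root_comp} to $R^{l(\lambda)}$, which has magnitude $(d^n,\softO{(nd)^{n-1}(d\mu+\tau)})$ by the first paragraph; the resulting cost $\softO{(d^n)^3+(d^n)^2(nd)^{n-1}(d\mu+\tau)+d^n\rho}$ equals $\softO{d^{3n}+n^{n-1}d^{3n-1}(d\mu+\tau)+d^n\rho}$, and since $\mu\ge1$ the first summand $d^{3n}=d\cdot d^{3n-1}$ is absorbed by $n^{n-1}d^{3n-1}(d\mu+\tau)$, leaving exactly~(\ref{costforprojection}).

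For \textbf{(c)}, fix $(x_1,\ldots,x_n)\in\mathcal{S}$ and an index $i$. Applying the argument of the paragraph preceding Lemma~\ref{lem:conditions} with $x_i$ playing the role of the hidden variable, $x_i$ is a root of the hidden-variable resultant $R^{x_i}$, which has magnitude $(d^n,\softO{(nd)^{n-1}\tau})$ by the first paragraph (with $\mu=0$); hence $\Gamma=\max_i\log(1+\|R^{x_i}\|_\infty)=\softO{(nd)^{n-1}\tau}$. The Cauchy bound for a nonzero integer polynomial $p$ gives $|z|<1+\|p\|_\infty$ for every root $z$, and, applied to the reversal of $p$ after dividing out the largest power of $x$ that divides it, $|z|>1/(1+\|p\|_\infty)$ for every nonzero root $z$. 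With $p=R^{x_i}$ this yields $2^{-\Gamma}<|x_i|<2^{\Gamma}$ for every nonzero coordinate $x_i$, and choosing the index attaining $\max_i|x_i|$ gives $2^{-\Gamma}<\max_i|x_i|<2^{\Gamma}$ for every solution other than the origin (which one may exclude, e.g.\ after a generic integer translation of the system that changes $\tau$ by only $\softO{1}$). The step I expect to be most delicate is not any single estimate but the consistent bookkeeping behind the claimed magnitude $(d^n,\softO{(nd)^{n-1}(d\mu+\tau)})$ of $R^{l(\lambda)}$; once that is pinned down, all of (a)--(c) follow directly from Lemma~\ref{lem:direction} and Propositions~\ref{complexity:resultant} and~\ref{prop:root_comp}.
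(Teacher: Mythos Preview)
Your proposal is correct and follows essentially the same route as the paper: part (a) by plugging the bitsize $\O{d\mu+\tau}$ of the $f_i^*$ into Proposition~\ref{complexity:resultant}, part (b) by applying Proposition~\ref{prop:root_comp} to $R^{l(\lambda)}$ of magnitude $(d^n,\softO{(nd)^{n-1}(d\mu+\tau)})$, and part (c) via Cauchy's bound on $R^{x_i}$. You supply more detail than the paper does---in particular the check that the cost of Lemma~\ref{lem:direction} is dominated by $\Pi$, and the observation that the lower bound in (c) needs the solution to be nonzero---but none of this deviates from the paper's argument.
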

\begin{proof}
(a) follows from Proposition~\ref{complexity:resultant} and the fact that the polynomials $f_i^*$ have bitsize $\softO{d\mu+\tau}$. For (b), we use Proposition~\ref{prop:root_comp} and note that $R^{l(\lambda)}$ has magnitude $(d^n,\softO{(nd)^{n-1}(d\mu+\tau)})$. (c) follows from Cauchy's root bound applied to $R^{x_i}$.
\end{proof}

By flipping $x_1$ and $x_{i}$, the results from the above corollary apply to any linear form $l(\lambda)=x_i+\sum_{j\neq i} l_{j}(\lambda)\cdot x_j\not\equiv x_i$ with $l_j=a_{j0}+a_{j1}\cdot\lambda$. 

\section{Two-Dimensional Grids}\label{sec:2dgrids}
Let $f,g \in \Z[x]$ be two (not necessarily square-free) polynomials of magnitude $(d,\tau)$, and let
 $X =V(f)= \{x_1,\ldots,x_{d'}\}$ and $Y =V(g)= \{y_1,\ldots,y_{d''}\}$ be the corresponding sets of \emph{distinct} complex roots of $f$ and $g$. We further define $G:=X\times Y\subset \C^2$, which is a two-dimensional grid of $d'\cdot d''\le d^2$ many points.
For $s \in \C$, let $l_s: \C \times \C \rightarrow \C,~ (x,y) \mapsto x + sy$.
We call $l_s$ (or simply $s$) \emph{separating for a set $M\subset \C$} if $l_s$ restricted to $M$ is injective, and \emph{non-separating} otherwise. 
Our goal in this section is to show that we can compute an integer (or even a whole sequence of integers) $s$ of bit size $\O{\log n}$ that is separating for $G$ at a cost that is comparable to the computation of the roots of $f$ and $g$.

\begin{theorem} \label{prop:2dim_sep_form}
Let $c$ be a positive integer of size $d^{\O{1}}$. There is an algorithm using $\softO{d^3 + d^2\tau}$ bit operations that outputs $s^* \in S:=\{1,\ldots,(d^4-1)c\}$
such that $l_{s}$ is separating for $G=X \times Y$ for all $s\in S^* = \{s^*,\ldots,s^*+c-1\}$.
Additionally, it holds that 
\[
|(x+sy)-(x'+sy')|\ge \frac{1}{4}\cdot |y-y'|
\]
for any two distinct elements $(x,y)\in G$ and $(x',y')\in G$.
\end{theorem}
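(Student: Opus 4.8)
For $s\in\C$ and two distinct grid points $(x,y),(x',y')\in G$ one has $(x+sy)-(x'+sy')=(y-y')(s-\beta)$ with $\beta:=\tfrac{x'-x}{y-y'}$ whenever $y\neq y'$. So, for $y\neq y'$, the desired inequality $|(x+sy)-(x'+sy')|\geq\tfrac14|y-y'|$ is \emph{equivalent} to $|s-\beta|\geq\tfrac14$, while for $y=y'$ the two grid points differ only in the first coordinate and the inequality is trivially true (its right-hand side being $0$); in particular $l_s$ is non-separating for $G$ only if $s=\beta$ for one of these $\beta$. Call an integer \emph{bad} if it lies in an open disc $D(\beta,\tfrac14)$, with $\beta$ ranging over the finitely many values $\tfrac{x'-x}{y-y'}$, $(x,y),(x',y')\in G$, $y\neq y'$. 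A disc $D(\beta,\tfrac14)$ with $|\operatorname{Im}\beta|\geq\tfrac14$ contains no integer; one with $|\operatorname{Im}\beta|<\tfrac14$ contains at most one (two would be at distance $<\tfrac12$); the quadruples $(x,x',y,y')$ and $(x',x,y',y)$ give the same $\beta$; and $x=x'$ gives $\beta=0$, whose disc avoids the positive integers. Hence there are at most $\tfrac12 d'(d'-1)d''(d''-1)\leq\tfrac12 d^2(d-1)^2<d^4-1$ bad integers in $S$. Partitioning $S=\{1,\dots,(d^4-1)c\}$ into the $d^4-1$ consecutive blocks of length $c$, some block $S^*=\{s^*,\dots,s^*+c-1\}$ contains no bad integer, and by the first part every $s\in S^*$ is separating for $G$ and satisfies the quantitative bound.

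\textbf{Algorithm.} To locate such a block, pass to the primitive squarefree parts of $f$ and $g$ (magnitude $(d,\softO{d+\tau})$ by Mignotte's factor bound) and use Proposition~\ref{prop:root_comp} to compute isolating discs of radius $2^{-\rho}$ for all their complex roots, i.e.\ for $X$ and $Y$, at cost $\softO{d^3+d^2\tau+d\rho}$. Take $\rho=\softO{d^2+d\tau}$: the Cauchy bound $|x_i|\leq 2^{\softO{\tau}}$ and the root separation bound $|y-y'|\geq 2^{-\softO{d^2+d\tau}}$ for distinct roots of an integer polynomial ensure that, for every $\beta$ of interest ($|\beta|\leq(d^4-1)c$), the value $\tilde\beta$ computed from the approximate roots satisfies $|\tilde\beta-\beta|<\tfrac1{16}$, so that the root isolation still costs $\softO{d^3+d^2\tau}$. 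With such certified approximations, each candidate $\tilde\beta$ decides \emph{with certainty} whether the (unique) integer it might forbid must be excluded: flag the nearest integer to $\operatorname{Re}\tilde\beta$ whenever $|\operatorname{Im}\tilde\beta|<\tfrac38$ and $\operatorname{Re}\tilde\beta$ is within $\tfrac38$ of it. Having flagged all integers that are, or conservatively might be, bad — still fewer than $d^4-1$ of them — return the first of the $d^4-1$ blocks containing no flagged integer.

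\textbf{Main obstacle.} The genuine difficulty is keeping the flagging-and-search step within $\softO{d^3+d^2\tau}$. A naive pass over all $\Theta(d^4)$ quadruples $(x_i,x_j,y_k,y_l)$ — indeed even materializing all $\Theta(d^2)$ pairwise differences of the high-precision roots, or scanning all $d^4-1$ blocks — is already too expensive when $\tau$ is small. The crux is to organize the whole search at the cost of merely isolating the roots of $f$ and $g$. I would try to exploit the product structure of $G$, processing the $\O{d^2}$ differences $y_k-y_l$ against the $\O{d^2}$ differences $x_i-x_j$ via pre-sorting and pruning pairs whose argument or magnitude is incompatible with a near-real ratio of modulus $\leq(d^4-1)c$, combined with an adaptive-precision scheme in which the working accuracy is raised only for the provably few (via the separation bound) quadruples whose ratio is extremely close to an integer, so that the total cost amortizes correctly. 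Making this bookkeeping fit the target bound is where the real work lies; the remainder is the elementary counting above and the black-box use of Proposition~\ref{prop:root_comp}.
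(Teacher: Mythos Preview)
Your combinatorial core is correct and matches the paper's: the non-separating (or ``bad'') integers are exactly those within distance $\tfrac14$ of some ratio $(x'-x)/(y-y')$, and there are fewer than $d^4$ of them, so some length-$c$ block in $S$ is clean. You also honestly flag that you do not have an algorithm meeting the $\softO{d^3+d^2\tau}$ bound --- and indeed that is the substance of the theorem. Your uniform-precision plan is already too expensive before any search begins: with $\rho=\softO{d^2+d\tau}$, every approximate root carries $\softO{d^2+d\tau}$ bits, so even forming the $\Theta(d^2)$ pairwise differences costs $\softO{d^4+d^3\tau}$.

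The paper closes the gap with four ideas you are missing. First, it works not with the complex ratios $(x'-x)/(y-y')$ but with the \emph{moduli} $\nu=|x_i-x_j|$ and $\delta=|y_k-y_l|$; since a bad positive integer $s$ satisfies $s=|s|\in\mathcal{F}=\{\nu/\delta\}$, avoiding $\mathcal{F}$ suffices, and one now deals with positive reals only. Second, it uses \emph{relative} approximations of quality merely $\rho=O(\log d)$: each $\tilde\nu$ is computed to absolute quality $B_\nu+\rho$ (where $B_\nu=\log\max(1,\nu)+\log\max(1,1/\nu)$), so $\tilde\nu$ is stored on $O(B_\nu+\log d)$ bits, and the amortized bound $\sum_\nu B_\nu=\softO{d^2+d\tau}$ from Proposition~\ref{importantbounds} keeps the total size of $\tilde{\mathrm{N}}$ and $\tilde{\Delta}$ at $\softO{d^2+d\tau}$. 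Third, it \emph{sorts} $\tilde{\mathrm{N}}$ once (with a modified merge sort, Lemma~\ref{lem:sorting}, so that each element participates in only $\softO{1}$ comparisons and the total cost is $\softO{d^2+d\tau}$); then for any interval $\{s,\ldots,s'\}\subset S$ and any fixed $\tilde\delta$, a binary search in the sorted $\tilde{\mathrm{N}}$ counts how many $\tilde\nu$ give $[\tilde\nu/\tilde\delta]\in\{s,\ldots,s'\}$ in $\softO{B_\delta+\log d}$ bit operations, so the full preimage size $|P(s,s')|$ costs $\softO{d^2+d\tau}$ (Lemma~\ref{lem:preimage}). Fourth, rather than scanning blocks, it \emph{bisects} $S$: starting from $|P(1,d^4c)|<d^4$ and halving the interval so that the preimage count at least halves each time, after $O(\log d)$ steps one reaches a length-$c$ block with empty preimage. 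The $\tfrac14$-margin falls out because $[\cdot]$ rounds $\tilde\nu/\tilde\delta$ to the nearest integer and the relative error on the ratio is at most $1/(16d^4c)$.

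So your proposal is a correct existence argument plus an acknowledged gap; the paper's contribution is precisely the amortized, low-precision, sort-then-binary-search-then-bisect machinery that you gesture at but do not supply.
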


In Section~\ref{sec:preliminariesrootbounds}, we fix some definitions and recall well-known (amortized) bounds on the separations and absolute values of the roots of an integer polynomial. Then, in Section~\ref{sec:proofgrid}, we prove the above Theorem. In Section~\ref{sec:preimage}, we show that if $s$ is separating for $G$ and if a subset $G'$ of $G$ maps via $l_s$ one-to-one onto a subset $Z$ of $V(h)$, where $h\in\Z[x]$ is of magnitude $(d,\tau)$, then we can recover $G'$ from $Z$ using $\softO{d^3+d^2\tau}$ bit operations. 

\subsection{Definitions and Bounds}\label{sec:preliminariesrootbounds}

Let $X$ and $Y$ be defined as above. 
For $i,j,k,l \in \{1,\ldots,d\}$, with $i<j$ and $k<l$, we define $\nu_{ij} := |x_i - x_j|$ and $\delta_{kl} := |y_k - y_l|$. Let $\mathrm{N},\Delta$ be the sets of all $\nu_{kl},\delta_{ij}$, respectively. Furthermore, let $\mathcal{F}:=\{\frac{\nu}{\delta}:\nu\in\mathrm{N}\text{ and }\delta\in\Delta\}$. 
Notice that, for the proof of Theorem~\ref{prop:2dim_sep_form}, it suffices to compute a positive integer $s^*\le d^4c-c$ with $|s-\frac{\nu}{\delta}|\ge \frac{1}{4}$ for all $s\in\{s^*,\ldots,s^*+c-1\}$.

For $x \in \C$ and some $\rho\in\N$, we say that $\tilde{x} \in \C$ is an \emph{approximation of absolute (relative) error } $\varepsilon = 2^{-\rho}$ if $|x - \tilde{x}| < \varepsilon$ ($(1-\varepsilon)x < \tilde{x} < (1+\varepsilon)x$). In this case, $\rho$ is called the \emph{absolute (relative) approximation quality} of $\tilde{x}$.
Furthermore, for $x > 0$,
we define $B_x := \log \max\{1,x\} + \log \max\{1,1/x\}$.

We now recall some well-known results on the separations and the absolute values of the roots of an integer polynomial; proofs can be found in~\cite{Mehlhorn:roots,DBLP:journals/jc/KobelS15}. 

\begin{prop}\label{importantbounds}
Let $P\in\Z[x]$ be a polynomial of magnitude $(d,\tau)$ with distinct complex roots $z_1$ to $z_{d'}$ of respective multiplicities $m_i:=\operatorname{mult}(z_i,P)$. Let $\operatorname{Mea}(P):=|\operatorname{lcf}(P)|\cdot \prod_{i=1}^{d'} \max(1,|z_i|)^{m_i}$ be the \emph{Mahler measure} of $P$, and let $\sigma_i:=\operatorname{sep}(z_i,P):=\min_{j\neq i}|z_i-z_j|$ be the \emph{separation of $z_i$}. Then, it holds:
\begin{itemize}
\item[(a)] $\operatorname{Mea}(P)\le \|P\|_{2}\le (n+1)\cdot 2^\tau.$
\item[(b)] $\sum_{i,j:i\neq j}\log\max(|z_i-z_j|,|z_i-z_j|^{-1})=\softO{d^2+d\tau}$.
\item[(c)] $\sum_{i}m_i\log\operatorname{sep}(z_i,P)=\softO{d^2+d\tau}$.
\end{itemize}
\end{prop}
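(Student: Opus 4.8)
The three estimates are classical, and I would prove them essentially as in \cite{Mehlhorn:roots,DBLP:journals/jc/KobelS15}. For (a) the plan is to combine Landau's inequality $\operatorname{Mea}(P)\le\|P\|_2$ with the elementary bound $\|P\|_2\le\sqrt{d+1}\,\|P\|_\infty\le\sqrt{d+1}\cdot 2^\tau\le (d+1)2^\tau$. For Landau's inequality I would use the ``root flipping'' argument: write $P=\operatorname{lcf}(P)\prod_\ell (x-\beta_\ell)$ over all complex roots with multiplicity, and replace each factor $(x-\beta_\ell)$ with $|\beta_\ell|>1$ by $(\overline{\beta_\ell}\,x-1)$ to obtain a polynomial $Q$ of the same degree. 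One checks the identity $\|(x-\beta)p\|_2=\|(\overline{\beta}\,x-1)p\|_2$ for every $p\in\C[x]$ and $\beta\in\C$ by a direct comparison of coefficient vectors; it gives $\|Q\|_2=\|P\|_2$, while by construction $|\operatorname{lcf}(Q)|=|\operatorname{lcf}(P)|\prod_{|\beta_\ell|>1}|\beta_\ell|=\operatorname{Mea}(P)\le\|Q\|_2$.

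For (b) I would split the sum over ordered pairs $i\neq j$ according to whether $|z_i-z_j|\ge 1$ or $|z_i-z_j|<1$. For the ``far'' pairs one has $\log|z_i-z_j|\le 1+\log^{+}|z_i|+\log^{+}|z_j|$, where $\log^{+}t:=\log\max\{1,t\}$; since there are at most $(d')^2\le d^2$ such pairs and $\sum_i\log^{+}|z_i|\le\log\operatorname{Mea}(P)=\softO{\tau}$ by (a), this contributes $\softO{d^2+d\tau}$. For the ``near'' pairs I would pass to the squarefree part $P^{\ast}$ of $P$, which has the same (now simple) distinct roots and, by Mignotte's factor bound, magnitude $(d,\softO{d+\tau})$; hence $\operatorname{disc}(P^{\ast})$ is a nonzero integer, and the identity $\operatorname{disc}(P^{\ast})=\operatorname{lcf}(P^{\ast})^{2d'-2}\prod_{i<j}(z_i-z_j)^2$ gives $\sum_{i<j}\log|z_i-z_j|=\tfrac12\log|\operatorname{disc}(P^{\ast})|-(d'-1)\log|\operatorname{lcf}(P^{\ast})|\ge -(d'-1)\log\|P^{\ast}\|_\infty=-\softO{d^2+d\tau}$. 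Subtracting the far-pair contribution yields $\sum_{|z_i-z_j|<1}(-\log|z_i-z_j|)=\softO{d^2+d\tau}$, and doubling for ordered pairs proves (b).

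For (c) observe that $\operatorname{sep}(z_i,P)=\operatorname{sep}(z_i,P^{\ast})$ and that, \emph{without} the weights, the bound is immediate from (b), because $-\log\operatorname{sep}(z_i,P)=\max_{j\neq i}(-\log|z_i-z_j|)\le\sum_{j\neq i}\log\max\{1,|z_i-z_j|^{-1}\}$. The real content is the multiplicity weights $m_i$, and the plan is to show $\prod_i\operatorname{sep}(z_i,P)^{m_i}\ge 2^{-\softO{d^2+d\tau}}$ by a Davenport--Mahler-type bound that keeps track of multiplicities. Concretely, I would form the nearest--neighbour graph on the distinct roots of $P$, orient each edge from the endpoint of smaller absolute value to the one of larger (ties broken arbitrarily), use a planar packing argument to bound its in-degree by a constant, and decompose it into $\O{1}$ acyclic subgraphs of in-degree at most one. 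Applying the (multiplicity-aware) Davenport--Mahler bound to $P$ along each such subgraph $G_r$ lower-bounds $\prod_{(i,j)\in G_r}|z_i-z_j|^{m_i}$ by a bounded power of $|\operatorname{lcf}(P)|$, $\operatorname{Mea}(P)$ and $|\operatorname{disc}(P^{\ast})|\ge 1$, which via (a) and the bounds on $P^{\ast}$ is $\ge 2^{-\softO{d^2+d\tau}}$; summing over the constantly many subgraphs gives the product bound. The matching upper bound $\sum_i m_i\log\operatorname{sep}(z_i,P)\le\softO{d\tau}$ follows from Cauchy's root bound, $\operatorname{sep}(z_i,P)\le 2\max_k\max\{1,|z_k|\}\le 2^{\softO{\tau}}$, together with $\sum_i m_i\le d$.

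The part I expect to be the main obstacle is the multiplicity bookkeeping in (c): reducing to the squarefree part discards the weights $m_i$, and bounding each term $m_i\log(1/\operatorname{sep}(z_i,P))$ separately only yields $\softO{d^3+d^2\tau}$. Getting the stated bound requires carrying the exponents $m_i$ correctly through the amortized Davenport--Mahler estimate, which is precisely the technical heart of the argument; here I would follow the treatment in \cite{Mehlhorn:roots,DBLP:journals/jc/KobelS15}.
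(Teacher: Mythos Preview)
The paper does not actually prove this proposition: it only records the three bounds and defers to \cite{Mehlhorn:roots,DBLP:journals/jc/KobelS15} with the sentence ``proofs can be found in \ldots''. There is therefore no in-paper argument to compare against; your proposal already goes further than the paper by sketching the proofs, and it targets precisely the same two references.

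Your outlines for (a) and (b) are the standard ones and are correct. For (c) you correctly single out the multiplicity weights $m_i$ as the only nontrivial point and plan to use an amortized Davenport--Mahler--Mignotte bound from the cited sources, which is exactly where this estimate is established. One small remark on your formulation of step (c): the ``multiplicity-aware'' bound in those references is usually not phrased as a lower bound on $\prod_{(i,j)\in G_r}|z_i-z_j|^{m_i}$ along a graph with weighted edges; rather, the multiplicities enter through the identity
\[
\operatorname{Res}\!\bigl(P,(P^{\ast})'\bigr)=\operatorname{lcf}(P)^{d'-1}\prod_{i}\bigl((P^{\ast})'(z_i)\bigr)^{m_i},
\]
which is a nonzero integer and hence has absolute value at least $1$. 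Combining this with $(P^{\ast})'(z_i)=\operatorname{lcf}(P^{\ast})\prod_{j\ne i}(z_i-z_j)$ and the Mahler-measure upper bounds you already use in (b) yields $\prod_i\sigma_i^{m_i}\ge 2^{-\softO{d^2+d\tau}}$ directly, without the nearest-neighbour graph decomposition. Either route is fine, and both are in the spirit of the references the paper cites.
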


Part (b) of the above proposition implies that
\begin{align}\label{complbound:distances}
\sum\nolimits_{\nu\in \mathrm{N}} B_{\nu}+\sum\nolimits_{\delta\in\Delta} B_{\delta}=\softO{d^2+d\tau}.
\end{align}

\subsection{Separating Forms}\label{sec:proofgrid}
To compute an integer $s^*$ with the properties from Theorem~\ref{prop:2dim_sep_form}, we do not directly work with the set $\mathcal{F}$ of exact fractions but consider instead a set $\tilde{\mathcal{F}}$ of corresponding sufficiently good approximations. We start with the following Lemma:

\begin{lemma} \label{lem:approx}
One can compute approximations of relative quality $\rho$ of $\Delta$ and $\mathrm{N}$ using $\softO{d^3+ d^2\tau+d^2\rho}$ bit operations.
\end{lemma}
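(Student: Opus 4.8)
The plan is to compute the roots of $f$ and $g$ to successively increasing absolute precision, and to convert the resulting absolute-error approximations of the roots into relative-error approximations of the pairwise differences $\nu_{ij}$ and $\delta_{kl}$, balancing the total work by the amortized bounds from Proposition~\ref{importantbounds}.

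Concretely, I would proceed as follows. First, observe that if $\tilde z$ and $\tilde z'$ are approximations of $z$ and $z'$ with absolute error $2^{-L}$, then $\tilde z - \tilde z'$ approximates $z - z'$ with absolute error $2^{-L+1}$, and this becomes a \emph{relative} error $2^{-\rho}$ as soon as $2^{-L+1} \le 2^{-\rho}\cdot|z-z'|$, i.e.\ as soon as $L \ge \rho + \log(1/|z-z'|) + 1$. Hence, for a difference $\nu$ or $\delta$, the required absolute precision is $L_\nu = \rho + B_\nu + \O{1}$ (using $B_\nu \ge \log(1/\nu)$ when $\nu < 1$, and noting $B_\nu \ge 0$ always covers the case $\nu \ge 1$ where a fixed precision suffices). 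The trouble is that we do not know the $B_\nu$ in advance, so I would use the standard doubling trick: run the root isolator of Proposition~\ref{prop:root_comp} to produce isolating disks, then repeatedly refine all roots to absolute precision $2^{-\ell}$ for $\ell = 1, 2, 4, 8, \ldots$; after each refinement, for every pair $(i,j)$ that has not yet been certified, test whether the current approximate difference $\tilde\nu_{ij}$ has been pinned down to relative error $2^{-\rho}$ (this can be checked from the approximations themselves together with a lower bound on $|\tilde\nu_{ij}|$), and if so, freeze that pair. The process stops once all pairs are certified, which happens at the latest when $2^\ell \gtrsim \rho + \max_{ij} B_{\nu_{ij}}$.

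For the complexity, I would invoke Proposition~\ref{prop:root_comp}: refining all $d'$ roots of $f$ (resp.\ $d''$ roots of $g$) to absolute precision $2^{-\ell}$ costs $\softO{d^3 + d^2\tau + d\ell}$ bit operations. Summing over the geometric progression of values of $\ell$ up to the final precision $\ell^\ast = \O{\rho + \max_{ij} B_{\nu_{ij}}}$, the total root-refinement cost telescopes to $\softO{d^3 + d^2\tau + d\ell^\ast}$. Now $\max_{ij} B_{\nu_{ij}} \le \sum_{\nu \in \mathrm{N}} B_\nu = \softO{d^2 + d\tau}$ by~(\ref{complbound:distances}), so $d\ell^\ast = \softO{d\rho + d^3 + d^2\tau}$, and the root-refinement contribution is within the claimed bound $\softO{d^3 + d^2\tau + d^2\rho}$. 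It remains to account for the arithmetic on the differences: there are $\O{d^2}$ pairs, and forming and testing each $\tilde\nu_{ij}$ at precision $\ell$ costs $\softO{\ell}$ per pair per round; more carefully, for a fixed pair this is dominated by a geometric sum up to its own freezing precision $L_{\nu_{ij}} = \O{\rho + B_{\nu_{ij}}}$, so the total is $\sum_{ij}\softO{\rho + B_{\nu_{ij}}} = \softO{d^2\rho + d^2 + d\tau}$, again within budget. Adding the symmetric cost for $\Delta$ gives the overall bound $\softO{d^3 + d^2\tau + d^2\rho}$.

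The main obstacle I anticipate is the bookkeeping of the adaptive stopping criterion: one must verify that a \emph{relative}-quality-$\rho$ approximation of $\nu_{ij}$ can be \emph{certified} purely from the approximate data (the $2^{-\ell}$-accurate root approximations) without knowing the exact value $\nu_{ij}$ — in particular, one needs a reliable lower bound on $|\nu_{ij}|$ at the moment of freezing, which follows once $2^{-\ell+1}$ is smaller than, say, half the current approximate difference. A second, purely technical point is to make sure the degenerate case $\nu_{ij} \ge 1$ (where $B_{\nu_{ij}}$ only reflects $\log\nu_{ij}$ and a constant absolute precision plus $\rho$ suffices) and the case of clustered roots are both subsumed by the single bound $L_{\nu_{ij}} = \rho + B_{\nu_{ij}} + \O{1}$; this is where the symmetric definition $B_x = \log\max\{1,x\} + \log\max\{1,1/x\}$ does the work. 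Everything else is a routine combination of Proposition~\ref{prop:root_comp} with the amortized separation bound~(\ref{complbound:distances}).
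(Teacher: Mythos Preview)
Your proposal is correct and follows essentially the same route as the paper: approximate the roots of $f$ and $g$ via Proposition~\ref{prop:root_comp}, convert absolute approximations of the roots into relative approximations of the pairwise differences using the observation that absolute quality $B_\nu+\rho$ on $\nu$ yields relative quality $\rho$, and bound the total work with the amortized estimate~(\ref{complbound:distances}). The only cosmetic difference is that the paper computes the roots once to the worst-case precision $\softO{d^2+d\tau+\rho}$ (which it can, since an a~priori bound on $\max_\nu B_\nu$ is available from Proposition~\ref{importantbounds}) and then forms each $\tilde\nu$ directly, whereas you run an adaptive doubling loop with per-pair certification; both lead to the same $\softO{d^3+d^2\tau+d^2\rho}$ bound.
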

\begin{proof}
It suffices to show the claim for $\mathrm{N}$. 
For $\nu=|x_i-x_j|\in \mathrm{N}$, let $s \in \Z$ be such that $\nu = 2^{-s}m$ with $1/2 \leq m < 1$. It follows that $s \leq \max\{0,\log 1/\nu\}$.
Thus, any absolute approximation $\tilde{\nu}$ of quality $B_\nu+\rho$ of $\nu$ constitutes a relative approximation of $\nu$ of quality $\rho$ and of absolute error at most $1$. 
From Proposition \ref{prop:root_comp} (applied to the polynomial $x^2-(a^2+b^2)$ with $a$ and $b$ the real and imaginary part of $x_i-x_j$, respectively), it follows that we can compute such a (dyadic) $\tilde{\nu}$ using $\softO{B_\nu+\rho}$ bit operations. By Proposition~\ref{importantbounds}, $\sum_{\nu\in \mathrm{N}} B_{\nu}=\softO{d^2+d\tau}$, and thus the bit complexity for computing all approximations $\tilde{\nu}$ is bounded by $\softO{d^2+d\tau+d^2\rho}$. 
Notice that the above computation requires an absolute approximation of $x_i$ and $x_j$ of quality $\softO{B_\nu}+\rho$, which is always bounded by $\softO{d^2+d\tau+\rho}$. Hence, using Proposition~\ref{prop:root_comp} (applied to the polynomial $f$), it follows that such approximations can be computed in $\softO{d^3+d^2\tau+d\rho}$ bit operations. This proves our claim.
\end{proof}

In the following, let $\tilde{\mathrm{N}}$ and $\tilde{\Delta}$ be the sets obtained from running the algorithm from Lemma~\ref{lem:approx} with $\rho := \log (64d^4\cdot c)$, where $c$ is a fixed positive integer of size $d^\O{1}$. From the construction of $\tilde{\mathrm{N}}$, it follows that $B_{\tilde{\nu}}=B_{\nu}+\O{\log d}$. In addition, each $\tilde{\nu}\in \tilde{\mathrm{N}}$ is a dyadic number that can be represented by at most $\O{B_\nu+\log d}$ many bits, where $\nu$ is the corresponding exact value contained in $\mathrm{N}$. A corresponding statement also holds for $\Delta$. 
For deriving an SLF for $G$, we employ a kind of binary search on the approximations of the fractions in $\mathcal{F}$. For this, we sort $\tilde{\mathrm{N}}$ using a variant of merge sort. We actually need to modify the classical merge sort algorithm as, in our model of computation, comparisons are not of unit cost, but of cost linear in the bitsize of the operands. This poses a problem if the list to be sorted is composed of two halves of size $\Omega(d^2)$ each, say $L$ and $L'$, such that for all $\ell \in L$ and $\ell' \in L'$, $\ell < \ell'$. In this case, once $L$ and $L'$ are sorted by the respective recursive instance of the sorting procedure, the algorithm continues to compare the largest elements of these sublists with each other. By assumption, no element in $L$ will ever be larger than any element in $L'$, leading to $\Omega(d^2)$ comparisons of the elements in $L$ with the largest element in $L'$. Notice that the largest element in $L'$ might be of bitsize $\Omega(d\tau)$, so comparing it $\Omega(d^2)$ times requires $\Omega(d^3\tau)$ bit operations, which would exceed our claimed complexity bound.

\begin{lemma} \label{lem:sorting}
There is an algorithm sorting $\tilde{\mathrm{N}}$ in $\softO{d^2 + d\tau}$ bit operations.
\end{lemma}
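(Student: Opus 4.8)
The plan is to run a modified merge sort on the list $\tilde{\mathrm{N}}$ of $O(d^2)$ dyadic numbers, where the cost of each comparison is linear in the bitsize of the operands, and where the key is to choose in each merge step which of the two sorted sublists to draw the next element from so that we never repeatedly compare against a single very large element. First I would recall the structure of the data: by the remark following Lemma~\ref{lem:approx}, each $\tilde{\nu}\in\tilde{\mathrm{N}}$ is representable with $O(B_\nu+\log d)$ bits, and by~(\ref{complbound:distances}) we have $\sum_{\nu\in\mathrm{N}}B_\nu=\softO{d^2+d\tau}$, so the \emph{total} bitsize of all elements in the list is $\softO{d^2+d\tau}$. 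Thus, if we could guarantee that every comparison performed over the whole execution involves a ``fresh'' element that is then never compared again, or more precisely that the total bitsize of all elements participating in comparisons is bounded by a polylogarithmic factor times $\sum_\nu B_\nu$ per level of recursion, we would be done, since there are $O(\log d)$ levels.

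The standard merge of two sorted lists $L$ and $L'$ repeatedly compares the current front elements and outputs the smaller one. The bad case described before the lemma is exactly when one front element $\ell'$ of $L'$ is huge and all of $L$ is smaller: it gets compared $|L|$ times. The fix I would use is a \emph{doubling / galloping} merge (as in Hwang--Lin or the standard exponential-search merge): instead of comparing front elements one at a time, when we are about to output elements from $L$, we probe $L$ at positions $1,2,4,8,\dots$ ahead, compare $\ell'$ against those probed elements, locate by binary search the block of $L$ that precedes $\ell'$, and splice that whole block into the output at once. This way $\ell'$ participates in only $O(\log|L|)$ comparisons before it is itself output, rather than $\Omega(|L|)$. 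Symmetrically for large elements of $L$. Consequently, over one merge each element of each list is ``charged'' for at most $O(\log d)$ comparisons, and in each such comparison it is paired with an element whose bitsize we must also account for.

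The heart of the analysis, and the step I expect to be the main obstacle, is to bound the \emph{sum of bitsizes of operand pairs} across all comparisons, rather than merely the number of comparisons. Charging each comparison to the \emph{smaller} of the two elements does not immediately help, because a small element can be compared against a large one. Instead I would argue as follows: in the galloping merge, whenever $\ell'$ (from $L'$) is compared against a probed element $\ell\in L$, either $\ell<\ell'$ and $\ell$ will be output in the very next spliced block (so $\ell$ is charged, and it is charged $O(\log d)$ times total, each time against \emph{some} element of $L'$ — but crucially once $\ell$ is output it never participates again), or $\ell\ge\ell'$ and this is the last comparison $\ell'$ makes before $\ell'$ is output. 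Thus each element is the ``active probed/front element'' in only $O(\log d)$ comparisons \emph{per level}, and in each of those the cost is the sum of its own bitsize and that of the one opposing element; summing the element's own bitsize over all its $O(\log d)$ charges at a level gives $O(\log d)\cdot B_\nu$, and summing over all elements and all $O(\log d)$ levels gives $\softO{d^2+d\tau}$. The opposing-element bitsizes are handled by the same bound with the roles of $L$ and $L'$ swapped. Putting these together yields a total of $\softO{d^2+d\tau}$ bit operations for the whole sort, with an additional $\softO{d^2}$ for index bookkeeping (pointers and block lengths are $O(\log d)$-bit), which is absorbed. This proves the claimed bound.
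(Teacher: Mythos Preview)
Your proposal is correct and is essentially the same approach the paper takes: the paper also replaces the naive merge by an exponential-search (``galloping'') step so that each element of $\tilde{\mathrm{N}}$ participates in only $O(\log d)$ comparisons per merge stage (hence $O(\log^2 d)$ overall), and then bounds the total bit cost by $O(\log^2 d)\cdot\sum_\nu(B_\nu+\log d)=\softO{d^2+d\tau}$. Your worry about the ``opposing-element bitsizes'' is a non-issue once you have the per-element comparison bound: summing the cost $\text{bitsize}(a)+\text{bitsize}(b)$ over all comparisons is the same as summing, over each element, its bitsize times the number of comparisons it participates in, which is exactly the bound you (and the paper) already have.
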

\begin{proof}
In order to prove the desired bound, we alter slightly the merge stage of the algorithm. When merging two sorted sublists $L$ and $L'$, instead of successively comparing the current largest elements $\ell$ and $\ell'$ of $L$ and $L'$, respectively, and inserting them into the merged list accordingly, we perform a binary search on the sublist containing the largest element, say w.l.o.g. that $\ell' > l$, to find the smallest $\ell'' \in L'$ such that $\ell < \ell'' < \ell'$, and insert the part of the list between $\ell''$ and $\ell'$ (including those, of course) into the merged list, followed by $\ell$, and 
carry on with the rest of the algorithm as usual. 
In this case, we say that the binary search in $L'$ was conducted \emph{on behalf of} $\ell$.

We claim about this procedure that it ensures for every element from $\tilde{\mathrm{N}}$ to participate in at most $\O{\log d}$ many comparisons in each merge stage, and hence only in $\O{\log^2 d}$ many comparisons in total.
This will then directly yield the bound on the number of bit operations as stated.

To see this, let $L$ and $L'$ to be the two lists to be merged, and w.l.o.g. let $x \in L = (x_t,\ldots,x,x_k,\ldots,x_1)$ be some element. 
Consider now some comparison of $x$ with an element from $L'$. By the definition of our algorithm, this might either be because there is a binary search conducted in $L'$ on behalf of $x$, or because $x$ is compared in the course of some binary search conducted on behalf of some element of $L'$. In the first case, there are only $\O{\log d}$ comparisons involved, and $x$ is inserted in the merged list thereafter, leading to $\O{\log d}$ comparisons, which is fine. 

On the other hand, $x$ may be compared at most once per binary search conducted from some element $x'$ of $L'$. We can bound the number of binary searches such that this happens as follows: 
If $x_1 > x'$, then the binary search will compare the elements $x_1,x_2,\ldots,x_{2^r}$ to $x'$ in this order, with $r$ minimal with the property that $x_{2^r} < x'$. We may assume that $2^r < 2k$ as, otherwise, $x$ would be removed in this step and inserted in the merged list, preventing it from taking part in any more comparisons. Assume that $x$ is compared to some element, which implies that $2^r \geq k+1$. This means that the sought element cannot be contained in $x_1,\ldots,x_{\lfloor k/2 \rfloor}$, and hence at least $\lfloor k/2 \rfloor$ elements will be removed and inserted into the merged list a consequence of this binary search. Since $k < d$, this bounds the number of times this can happen by $\log d$. 
So, in total, every element is compared at most $\O{\log d}$ times in a single merge stage.
The cost for all comparisons is then upper bounded by $\O{\log^2(d)\cdot \sum_{\nu} (B_{\nu}+\log d)}=\softO{d^2 + d\tau}$, which shows the claim.
\end{proof}

By definition, $\mathcal{F}$ is the image of $\mathrm{N}\times\Delta$ under the mapping $(\nu,\delta) \mapsto \nu/\delta$.
In a similar vein, we will now define a set $\tilde{\mathcal{F}}\subset \R\cup \{+\infty\}$ as the image of $\tilde{\mathrm{N}}\times\tilde{\Delta}$ under a slightly modified mapping $[\cdot]$, which differs from the initial mapping in the way that a pair $\tilde{\phi}=(\tilde{\nu},\tilde{\delta})$ is either mapped to $0$ or $+\infty$ if $\tilde{\nu}\ll\tilde{\delta}$ or $\tilde{\nu}\gg\tilde{\delta}$, respectively.
More precisely, let $e_1,e_2 \in \Z$ be such that $2^{e_1} \leq \tilde{\nu} \leq 2^{e_1 + 1}$ and $2^{e_2} \leq \tilde{\delta} \leq 2^{e_2 + 1}$. If $e_1 + 4 - e_2 \le 0$, in which case $\tilde{\nu}/\tilde{\delta}\le 1/8$, we define $[\tilde{\phi}] := 0$.
If $2^{e_1 - e_2 - 1} \geq 8d^4c$, then $\tilde{\nu}/\tilde{\delta} \geq 8d^4c$, and we define $[\phi] := +\infty$. 
If neither is the case, that is, 
if $1 < 2^{e_1+4-e_2}$ and $2^{e_1 - e_2 - 1} < 8d^4c$, then this implies that $\tilde{\nu}/\tilde{\delta} \in (1/32,32d^4c)$.
In this case, we define $[\tilde{\phi}]$ to be the nearest integer to $\tilde{\nu}/\tilde{\delta}$; we break ties by rounding to the smaller one.
We now collect some properties of this mapping.
\begin{lemma} \label{lem:preprocessing} Let $\phi := (\nu,\delta) \in \mathrm{N}\times\Delta$ and $\tilde{\phi}= (\tilde{\nu},\tilde{\delta})$ be the corresponding approximation in $\tilde{\mathrm{N}}\times\tilde{\Delta}$. Then
\begin{enumerate}
\item[(a)] $[\tilde{\phi}]$ can be computed using $\softO{\min\{B_\delta,B_\nu\}+\log d}$ bit operations.
\item[(b)] If $\nu/\delta \in \{1,\ldots,2d^4c\}$, then $[\tilde{\phi}] = \nu/\delta$
\item[(c)] Let $s \in \N \cup \{+\infty\},$ and let $\tilde{\nu},\tilde{\nu}' \in \tilde{\mathrm{N}}$, $\tilde{\delta} \in \tilde{\Delta}$ with $\tilde{\nu}/\tilde{\delta} < \tilde{\nu}'/\tilde{\delta}$.
Then, $[(\tilde{\nu},\tilde{\delta})] \geq s$ implies $[(\tilde{\nu}',\tilde{\delta})] \geq s$.
\end{enumerate}
\end{lemma}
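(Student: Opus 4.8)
The three parts are all essentially routine once the definitions are unwound, so the plan is to handle each by a short direct computation. The only genuine content is keeping track of the binary exponents $e_1, e_2$ and the various slack factors hidden in the thresholds $1/8$, $8d^4c$, $32$ and $32d^4c$.

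For part~(a): the value $[\tilde\phi]$ is determined by first extracting $e_1, e_2$ with $2^{e_1}\le\tilde\nu\le 2^{e_1+1}$ and $2^{e_2}\le\tilde\delta\le 2^{e_2+1}$, then performing a single integer division $\tilde\nu/\tilde\delta$ (and a rounding) \emph{only} in the middle case. By the discussion following Lemma~\ref{lem:approx}, $\tilde\nu$ is a dyadic number representable with $\O{B_\nu+\log d}$ bits and similarly for $\tilde\delta$; thus $e_1=\O{B_\nu+\log d}$ and $e_2=\O{B_\delta+\log d}$ can be read off in that many bit operations. In the two extreme cases ($[\tilde\phi]\in\{0,+\infty\}$) we are done after comparing $e_1+4-e_2$ against $0$ and $2^{e_1-e_2-1}$ against $8d^4c$, which costs $\softO{\log d + \min\{B_\nu,B_\delta\}}$ since the relevant exponent differences are $\O{\log d}$ in magnitude in these cases. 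In the middle case we have $\tilde\nu/\tilde\delta\in(1/32,32d^4c)$, so the quotient is bounded by $32d^4c = d^{\O{1}}$; the division and rounding of two dyadic numbers with a quotient of $\O{\log d}$ bits, each given to the precision needed to pin down the nearest integer, costs $\softO{\min\{B_\nu,B_\delta\}+\log d}$ by standard fast arithmetic. This gives the stated bound.

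For part~(b): suppose $\nu/\delta=k\in\{1,\ldots,2d^4c\}$. The approximations have relative quality $\rho=\log(64d^4c)$, so $\tilde\nu=\nu(1+\eta_1)$, $\tilde\delta=\delta(1+\eta_2)$ with $|\eta_i|<2^{-\rho}=1/(64d^4c)$; hence $\tilde\nu/\tilde\delta = k(1+\eta_1)/(1+\eta_2)$, which differs from $k$ by less than $k\cdot 3\cdot 2^{-\rho} < 2d^4c\cdot 3/(64d^4c)< 1/8$. In particular $\tilde\nu/\tilde\delta\in(k-1/8,k+1/8)$, which forces $\tilde\nu/\tilde\delta>1/8$ (since $k\ge1$) and $\tilde\nu/\tilde\delta<2d^4c+1/8<8d^4c$, so neither extreme case of the definition of $[\cdot]$ applies: we land in the middle case and $[\tilde\phi]$ is the nearest integer to a number within distance $<1/8$ of $k$, hence equals $k$. (One should check the boundary-case bookkeeping: the extreme cases are defined via the exponents $e_1,e_2$ rather than the exact ratio, but $2^{e_1-e_2-1}\le\tilde\nu/\tilde\delta\le 2^{e_1-e_2+1}$, so $\tilde\nu/\tilde\delta<1/8$ forces $e_1+4-e_2\le 0$ to fail and likewise on the other side — this is the one spot where a little care is needed.)

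For part~(c): this is a monotonicity statement. Write $r=\tilde\nu/\tilde\delta$ and $r'=\tilde\nu'/\tilde\delta$ with $r<r'$, and let $e_1,e_1',e_2$ be the exponents of $\tilde\nu,\tilde\nu',\tilde\delta$. Since $r<r'$ we have $e_1\le e_1'$. The map $[\cdot]$ is, by construction, nondecreasing in the first argument once we check it respects the three regimes: if $[(\tilde\nu,\tilde\delta)]=+\infty$ then $2^{e_1-e_2-1}\ge 8d^4c$, so $2^{e_1'-e_2-1}\ge 8d^4c$ as well and $[(\tilde\nu',\tilde\delta)]=+\infty\ge s$; if $[(\tilde\nu,\tilde\delta)]$ is a finite integer $\ge s$, then either $[(\tilde\nu',\tilde\delta)]=+\infty\ge s$, or $(\tilde\nu',\tilde\delta)$ is also in the middle regime (it cannot be in the $0$-regime since $r'>r$ and $r$ is already $>1/8$ there — again using $2^{e_1-e_2}\le r$ together with $r\le r'\le 2^{e_1'-e_2+1}$ to compare exponents), and then $[(\tilde\nu',\tilde\delta)]$ is the nearest integer to $r'>r$, which is $\ge$ the nearest integer to $r$, hence $\ge s$; the case $[(\tilde\nu,\tilde\delta)]=0$ is trivial since $0\le s$ is only relevant when $s=0$. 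The main obstacle in the whole lemma is purely the off-by-one exponent bookkeeping in~(b) and~(c) — making sure the exponent-based definitions of the two extreme regimes are compatible with the ratio-based reasoning — and there is no deeper difficulty.
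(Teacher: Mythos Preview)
Your overall approach matches the paper's, and parts~(b) and~(c) are fine (your~(c) in fact spells out what the paper dismisses as ``follows directly from the definition''). There is, however, a genuine gap in your justification for~(a).

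You write that the exponent comparisons in the two extreme cases cost $\softO{\log d + \min\{B_\nu,B_\delta\}}$ ``since the relevant exponent differences are $\O{\log d}$ in magnitude in these cases.'' This is false: in the $0$-regime and the $+\infty$-regime, $|e_1-e_2|$ can be as large as $\Theta(d\tau)$, not $\O{\log d}$. More importantly, even just \emph{reading off} $e_1$ and $e_2$ separately, as you propose, already costs $\O{B_\nu+\log d}$ and $\O{B_\delta+\log d}$ respectively, which is the sum rather than the minimum. The paper's argument for the $\min$ bound is different: one does \emph{not} compute $e_1$ and $e_2$ explicitly, but instead compares them lazily by scanning the binary representations of $\tilde\nu$ and $\tilde\delta$ in parallel, stopping as soon as the outcome of the comparison $e_1 \lessgtr e_2+m$ is determined; this happens after $\O{\min\{|e_1|,|e_2|\}+|m|}$ steps. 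For the middle case, the paper's point (which you gesture at but do not state) is that $\tilde\nu/\tilde\delta\in(1/32,32d^4c)$ forces $B_{\tilde\nu}$ and $B_{\tilde\delta}$ to differ by at most $\O{\log d}$, so $\min\approx\max$ and the full division is within the stated bound.
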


\begin{proof}
For any $m\in\Z$, computing the sign of $e_1-e_2-m$ can be done using 
\[
\softO{\min\{|e_1|,|e_2|\}+|m|}
\]
bit operations as we need not compute the actual values of $e_1-e_2-m$, but only compare $e_1$ with $e_2+m$ by counting digits until the outcome of the comparison is clear, which happens after $2\min\{|e_1|,|e_2|\}+|m|$ steps. Hence, we can already determine whether $[\tilde{\phi}]\in \{0,\infty\}$ in $\O{\log d+\min\{B_{\tilde{\nu}},B_{\tilde{\delta}}\}}$ bit operations.
For the remaining case, we have ensured that $B_{\tilde{\delta}}$ and $B_{\tilde{\nu}}$ differ by at most $\O{\log d}$, and so the division $\tilde{\nu}/\tilde{\delta}$ can be performed in time $\softO{\log d+\min\{B_{\tilde{\delta}},B_{\tilde{\nu}}\}}$. Computing the nearest integer to this fraction is at most as expensive. This yields the first claim.

For the second claim, notice that the relative error in $\tilde{\mathrm{N}}$ and $\tilde{\Delta}$ (compared to $\mathrm{N}$ and $\Delta$) is bounded by $1/(64d^4c)$ by the choice of $\rho$.
Hence, $\tilde{\nu}/\tilde{\delta}$ is an approximation of $\nu/\delta$ with relative error of at most $1/(16d^4c)$.
In particular, if $\nu/\delta \in [1,4d^4\cdot c]$, then $\tilde{\nu}/\tilde{\delta}$ is an approximation with \emph{absolute} error of at most $1/4$. So, if $\nu/\delta \in \{1,\ldots,2d^4c\}$, then the ball of radius $1/4$ with center $\tilde{\nu}/\tilde{\delta}$ will contain exactly one integer, namely $\nu/\delta$. Since $3/4 \leq \tilde{\nu}/\tilde{\delta} \leq 2d^4c + 1/4$, we have $[\tilde{\phi}]\notin\{\pm\infty\}$, and thus $[\tilde{\phi}]$ must be equal to $\nu/\delta$.
The third claim follows directly from the definition of $[\cdot]$. 
\end{proof}
\begin{lemma} \label{lem:preimage}
For $s,s' \in \{1,\ldots,d^4\cdot c\}$ with $s\le s'$, the cardinality of the preimage $P := P(s,s') \subset \tilde{\mathrm{N}}\times\tilde{\Delta}$ of $\{s,\ldots,s'\}$ under $[\cdot]$ can be computed using $\softO{d^2 + d\tau}$ bit operations.
\end{lemma}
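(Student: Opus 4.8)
The plan is to count $|P(s,s')|$ by exploiting the monotonicity of $[\cdot]$ established in Lemma~\ref{lem:preprocessing}(c). For a fixed $\tilde\delta\in\tilde\Delta$, sort the elements of $\tilde{\mathrm{N}}$ so that $\tilde\nu_1/\tilde\delta \le \tilde\nu_2/\tilde\delta \le \cdots$; by part~(c) of Lemma~\ref{lem:preprocessing}, the sequence $[(\tilde\nu_1,\tilde\delta)], [(\tilde\nu_2,\tilde\delta)],\ldots$ is nondecreasing in $\N\cup\{+\infty\}$. Hence for each threshold $s$ the set $\{\tilde\nu\in\tilde{\mathrm{N}} : [(\tilde\nu,\tilde\delta)]\ge s\}$ is a suffix of this sorted list, and its size can be located by a single binary search (computing $[(\tilde\nu,\tilde\delta)]$ at each probe via Lemma~\ref{lem:preprocessing}(a)). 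Doing this for $s$ and for $s'+1$ and subtracting gives the number of $\tilde\nu$ with $[(\tilde\nu,\tilde\delta)]\in\{s,\ldots,s'\}$; summing over all $\tilde\delta\in\tilde\Delta$ yields $|P(s,s')|$.

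\textbf{Making the sorting cheap.} The obstacle is that the order on $\tilde{\mathrm{N}}$ induced by $\tilde\nu/\tilde\delta$ is the same for every $\tilde\delta$ — it is just the order on $\tilde{\mathrm{N}}$ itself — so I do \emph{not} re-sort for each $\tilde\delta$. Instead I sort $\tilde{\mathrm{N}}$ once, using the algorithm of Lemma~\ref{lem:sorting} at cost $\softO{d^2+d\tau}$, and reuse this single sorted array for all $|\tilde\Delta| = O(d^2)$ binary searches. Each binary search performs $O(\log d)$ probes, and each probe evaluates $[(\tilde\nu,\tilde\delta)]$ at cost $\softO{\min\{B_{\tilde\delta},B_{\tilde\nu}\}+\log d} \le \softO{\min\{B_{\tilde\delta},B_{\tilde\nu}\}} + \softO{1}$ bit operations by Lemma~\ref{lem:preprocessing}(a). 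It remains to bound the total over all searches.

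\textbf{Bounding the total cost.} There are $O(d^2)$ values of $\tilde\delta$, each giving two binary searches of $O(\log d)$ probes, so the $\softO{\log d}$ additive part contributes $\softO{d^2}$ overall. For the $\min\{B_{\tilde\delta},B_{\tilde\nu}\}$ part I bound it by $B_{\tilde\delta}$ and charge all probes of the two searches for a fixed $\tilde\delta$ to that $\tilde\delta$: the contribution is $\softO{(\log d)\cdot B_{\tilde\delta}}$, and summing over $\tilde\delta\in\tilde\Delta$ gives $\softO{(\log d)\sum_{\tilde\delta} B_{\tilde\delta}}$. Since $B_{\tilde\delta} = B_\delta + O(\log d)$ by the construction of $\tilde\Delta$, and $\sum_{\delta\in\Delta} B_\delta = \softO{d^2+d\tau}$ by~(\ref{complbound:distances}), this is $\softO{d^2+d\tau}$ (absorbing the extra $O(d^2\log d)$ from the $O(\log d)$ correction terms). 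Adding the one-time sorting cost $\softO{d^2+d\tau}$ from Lemma~\ref{lem:sorting} — the approximations $\tilde{\mathrm{N}},\tilde\Delta$ themselves being available as fixed inputs — yields the claimed $\softO{d^2+d\tau}$ bound. The only subtlety worth spelling out is that $[\cdot]$ may take the value $+\infty$: this is harmless since $+\infty \ge s$ for every finite $s$, so the "suffix of elements with value $\ge s$" description is still correct and the binary search is unaffected.
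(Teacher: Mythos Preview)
Your proof is correct and follows essentially the same route as the paper: sort $\tilde{\mathrm{N}}$ once via Lemma~\ref{lem:sorting}, then for each $\tilde\delta\in\tilde\Delta$ use the monotonicity from Lemma~\ref{lem:preprocessing}(c) to locate the boundary indices with two binary searches, charging each probe $\softO{B_\delta+\log d}$ via Lemma~\ref{lem:preprocessing}(a) and summing with~(\ref{complbound:distances}). The only cosmetic difference is that you phrase the two searches as thresholds at $s$ and $s'+1$ and subtract, whereas the paper phrases them as finding the largest $\nu_\ast$ with $[(\nu_\ast,\tilde\delta)]<s$ and the smallest $\nu^\ast$ with $[(\nu^\ast,\tilde\delta)]>s'$; these are equivalent.
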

\begin{proof}
By Lemma \ref{lem:sorting}, we may assume $\tilde{\mathrm{N}}$ to be sorted.
For each $\tilde{\delta} \in \tilde{\Delta}$, the cardinality of $P \cap (\tilde{\mathrm{N}}\times\{\tilde{\delta}\})$ can be computed using $\softO{(\log d)(B_\delta+\log d)}$ bit operations: Use two binary searches on $\tilde{\mathrm{N}}$ to find the maximal and minimal elements $\nu^\ast,\nu_\ast \in \tilde{\mathrm{N}}$ with $[(\nu_\ast,\tilde{\delta})] < s \leq s' < [(\nu^\ast,\tilde{\delta})]$. 
The cardinality of $P \cap (\tilde{\mathrm{N}}\times\{\tilde{\delta}\})$ is then the number of elements strictly between $\nu_\ast$ and $\nu^\ast$.
Lemma \ref{lem:preprocessing} implies 
correctness and a bound of $\softO{(\log d)(B_\delta+\log d)}$ bit operations for the binary searches. Summing over all $\tilde{\delta}\in\tilde{\Delta}$ yields the bound. 
\end{proof}

We can now prove Theorem~\ref{prop:2dim_sep_form}:
By Lemmas \ref{lem:approx} and \ref{lem:sorting}, we may already assume that $\mathrm{N}$ and $\Delta$ are approximated by corresponding sets $\tilde{\mathrm{N}}$ and $\tilde{\Delta}$, and that $\tilde{\mathrm{N}}$ is sorted.
As $s \in S=\{1,\ldots,d^4\cdot c\}$ is non-separating for $G=X\times Y$ if and only if $s \in \mathcal{F}$, the task can be reformulated as follows. We need to find $s^* \in S$ such that $\{s^*,\ldots,s^*+c\} \cap \mathcal{F} = \emptyset$. So, instead of working with $\mathcal{F}$ directly, we may replace $\mathcal{F}$ with $\tilde{\mathcal{F}}$ as $S\cap \mathcal{F}\subset S\cap \tilde{\mathcal{F}}$ according to part (b) of Lemma~\ref{lem:preprocessing}.
Using the definition of $\tilde{\mathcal{F}}$, the goal becomes to find $s^* \in S$ such that, for all $\tilde{\phi} \in \tilde{\mathrm{N}}\times\tilde{\Delta}$, we have $[\tilde{\phi}] \notin \{s^*,\ldots,s^*+c-1\}$, or equivalently, $|P(s^*,s^*+c-1)| = 0$. 
To do so, we use a bisection procedure, where we may assume $d^4$ and $c$ to be powers of two, say $2^k= d^4$ and $2^{k'}=c$.
Initially, let $s_0 := 1$ and $s'_0 := d^4c$. Inductively, choose
$(s_{i+1},s'_{i+1}) \in \{(s_i,\theta_i),(\theta_{i}+1,s'_i)\}$ such that $|P(s_{i+1},s'_{i+1})|$ is minimized, where $\theta_i := (s_i + s'_i - 1)/2$ is the center of the set $S_i:=\{s_i,\ldots,s_i'\}$.

By definition, the set $S_i$ contains exactly $d^4c/2^i$ elements for $i = 0,\ldots,k$, and hence $S_k=(s_k,s_k+c-1)$. 
Moreover, for $i = 0,\ldots,k-1$, it holds that $P(s_i,s'_i) = P(s_{i},\theta_{i}) \cup P(\theta_{i} + 1,s'_{i})$, which is a disjoint union. This implies that 
$|P(s_{i},s'_{i})| \leq |P(s_{i-1},s'_{i-1})|/2$ for $i = 1,\ldots,k$.
Since $|P(s_0,s'_0)|$ contains at most $\binom{d^2}{2}<d^4$ elements, we conclude that $P(s_k,\ldots,s_k+c-1)$ is empty, and thus 
each $s\in S^*:=\{s^*,\ldots,s^*-c+1\}$, with $s^*:=s_k$, is separating.

For the bit complexity, notice that there are $k = \O{\log d} $ recursive steps involved in the above approach, so the total bound follows from Lemma \ref{lem:preimage}. 

The claim on the distance to all fractions in $\mathcal{F}$ follows from the fact that $[\cdot]$ maps an element $\tilde{\phi}=(\tilde{\nu},\tilde{\delta})\in\tilde{\mathcal{F}}$ to the nearest integer to $\tilde{\nu}/\tilde{\delta}$ if $1/8<\tilde{\nu}/\tilde{\delta}<8d^4c$. For these elements, the corresponding exact fraction $\nu/\delta$ differs from $\tilde{\nu}/\tilde{\delta}$ by at most $1/4$, and thus $|s-\nu/\delta|>1/4$. For all other $\tilde{\phi}$, we either have $\nu/\delta\le 1/2$ or $\nu/\delta\ge 2d^4c$ as $\tilde{\nu}/\tilde{\delta}$ approximates $\nu/\delta$ with relative error at most $1/4$. Hence, also in this case, $|s-\nu/\delta|>1/4$ for all $s\in S^*$.

\subsection{Lifting Projections}\label{sec:preimage}

\begin{lemma} \label{lem:2dim_reconstruct}
Let $G=X\times Y$ and $s\in \{1,\ldots,d^4c\}$ be separating for $G$ as given in Theorem~\ref{prop:2dim_sep_form}. Let $h\in \Z[x]$ be a polynomial of magnitude $(d,\tau)$ and let $Z =\{z_1,\ldots,z_{d^*}\}\subset V(h)$. 

Suppose that, 
for each $z \in Z$, there exists a pair $(x_z,y_z) \in X \times Y$ such that $l_s(x_z,y_z) = x_z + sy_z = z$. Then, we can compute approximations $(\tilde{x}_z,\tilde{y}_z)$ for all pairs $(x_z,y_z)$ of absolute quality $\rho$ using $\softO{d^3 + d^2\tau + d\rho}$ bit operations. \end{lemma}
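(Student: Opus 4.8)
The plan is to reduce the statement to one sufficiently precise root isolation followed by a matching step. First I would fix a working precision $\rho' := \rho + \gamma$, where $\gamma = \softO{d^2+d\tau}$ is an upper bound for $\log s = \O{\log d}$ together with $\log(1/\operatorname{sep})$ over all root separations $\operatorname{sep}$ of $f$ and of $g$ (which is $\softO{d^2+d\tau}$ by Proposition~\ref{importantbounds}). By Proposition~\ref{prop:root_comp} applied to $f$, $g$ and $h$, one computes isolating disks of radius $<2^{-\rho'}$, hence approximations $\tilde{x}_j$, $\tilde{y}_k$, $\tilde{z}_i$ of absolute quality $\rho'$, for all distinct roots of the three polynomials; since each has magnitude $(d,\tau)$, this costs $\softO{d^3+d^2\tau+d\rho'} = \softO{d^3+d^2\tau+d\rho}$ bit operations. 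This is the only step depending on $\rho$, and it already accounts for the whole claimed bound except for the matching.

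For the matching I would fix a total order $\prec$ on $\C$, say $x \prec x' \iff a\operatorname{Re}(x)+b\operatorname{Im}(x) < a\operatorname{Re}(x')+b\operatorname{Im}(x')$ for integers $a,b$ of bit size $\O{\log d}$ chosen generically, so that the corresponding functional separates any two distinct roots of $f$, and any two distinct points among the roots of $f$ and the candidates $z_i-sy_k$, by a value of absolute size at least $2^{-\O{\log d}}$ times their Euclidean distance; only $d^{\O{1}}$ directions must be avoided, so such $(a,b)$ can be found deterministically (or drawn at random). Sorting $X$ with respect to $\prec$ costs $\softO{d^3+d^2\tau}$ since $|X|\le d$ and a single comparison of two roots of $f$ costs $\softO{d^2+d\tau}$. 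Now, for each $z_i\in Z$ and $y_k\in Y$, I form the candidate $q_{ik}:=z_i-sy_k$ and binary-search for $q_{ik}$ in the sorted $X$, refining the involved approximations only as far as needed to decide each comparison, and declaring $q_{ik}$ to coincide with a root of $f$ as soon as some comparison cannot be decided at a fixed precision $p^* = \softO{d^2+d\tau}$ (chosen large enough relative to the separations of $f$ and $g$). By hypothesis, $q_{ik}=x_{z_i}\in X$ when $y_k=y_{z_i}$; and when $y_k\ne y_{z_i}$, the identity $l_s(x_{z_i},y_{z_i}) = z_i = l_s(q_{ik},y_k)$ would contradict that $s$ is separating for $G=X\times Y$, so $q_{ik}\notin X$ -- and in fact, by the additional inequality of Theorem~\ref{prop:2dim_sep_form}, $|q_{ik}-x_j| \ge \tfrac14|y_{z_i}-y_k| > 0$ for every $j$. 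Hence for each $i$ exactly one $k$ produces a hit; for that $k$ I output $\tilde{y}_{z_i}:=\tilde{y}_k$ and $\tilde{x}_{z_i}:=\tilde{z}_i-s\tilde{y}_k$, whose error against $x_{z_i}$ is at most $(1+s)2^{-\rho'} \le 2^{-\rho}$.

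The part I expect to be the main obstacle is proving that this matching runs in $\softO{d^3+d^2\tau}$ bit operations: there are $\Theta(d^2)$ binary searches and a single comparison can cost $\softO{d^2+d\tau}$, so a naive count loses a factor of $d$, and one must show that the expensive comparisons are amortized. Every comparison in the search for $q_{ik}$ is between $q_{ik}$ and some root $x_j$, and by the choice of $(a,b)$ costs $\softO{\log(1/|q_{ik}-x_j|)+\log d} \le \softO{\log(1/m_{ik})+\log d}$, where $m_{ik}:=\min_j|q_{ik}-x_j|$ (note $|q_{ik}-x_j|\ge m_{ik}$ for every $j$); since there are $\O{\log d}$ comparisons, one search costs $\softO{\log(1/m_{ik})}$. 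For $y_k\ne y_{z_i}$ one has $m_{ik}\ge\tfrac14|y_{z_i}-y_k|$, so this is $\softO{1+\log^+(1/|y_{z_i}-y_k|)} \le \softO{1+B_{|y_{z_i}-y_k|}}$, while for $y_k = y_{z_i}$ the search gets stuck exactly once and costs $\softO{p^*}=\softO{d^2+d\tau}$. Summing the contributions over $k$ for a fixed $i$, and using that $\sum_{y_k\ne y_{z_i}}B_{|y_{z_i}-y_k|}$ is bounded by the full amortized bound $\sum_{\delta\in\Delta}B_\delta = \softO{d^2+d\tau}$ from~(\ref{complbound:distances}), every $z_i$ contributes $\softO{d^2+d\tau}$; since $d^*\le d$, the matching costs $\softO{d^3+d^2\tau}$, and together with the root isolation this gives the claimed $\softO{d^3+d^2\tau+d\rho}$. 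The only other delicate point is the total order: a plain lexicographic order on $(\operatorname{Re},\operatorname{Im})$ would not do, since a candidate $q_{ik}$ may be almost aligned in real part with some $x_j$ without coinciding with it, forcing an expensive comparison that cannot be amortized -- which is precisely why the generic functional is used.
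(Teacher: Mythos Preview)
Your approach is essentially the paper's: for each $z_i$ and each candidate $y_k$, test whether $z_i-sy_k$ hits some $x_j$, rule out wrong $y_k$'s using the $\tfrac14|y-y'|$ gap from Theorem~\ref{prop:2dim_sep_form}, and amortize via Proposition~\ref{importantbounds} to get $\softO{d^2+d\tau}$ per $z_i$ and hence $\softO{d^3+d^2\tau}$ for the whole matching. The paper packages this as iterative precision doubling with shrinking candidate sets $M_i\subseteq Y$ and $X_j^{(i)}\subseteq X$ rather than your single upfront working precision plus binary search, but the accounting lands on exactly the same sum $\sum_i\bigl(\sum_{k\ne k_i}B_{|y_{z_i}-y_k|}+\text{(one expensive hit)}\bigr)$.

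You are in fact more careful than the paper on one point: the paper also writes ``binary search on the approximations of $X$'' without ever saying with respect to which order on $\C$, and your generic functional $a\operatorname{Re}+b\operatorname{Im}$ is precisely what is needed to make that step rigorous. The one loose thread in your version is the deterministic selection of $(a,b)$: asserting that only $d^{\O{1}}$ directions are bad is correct, but certifying a given candidate (or detecting that it fails and moving to the next) within the $\softO{d^3+d^2\tau}$ budget still needs its own argument---essentially another instance of the separating-form machinery applied to the real and imaginary parts of the relevant differences. This is a detail to pin down, not a flaw in the overall strategy.
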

\begin{proof}
Let $i_k,j_k$ be such that $x_{i_k} + sy_{j_k} = z_k$.
By assumption, this uniquely defines $i_k,j_k$ for all $k$.
Fix some $k \in \{1,\ldots,d\}$, let $i \geq 0$ and define $L_i := 2^i$.
Then, approximating $z_k\in Z$ and each $y_j\in Y$ with absolute quality $L_i+\O{\log d}$ yields approximations $z^{(i)}_k$ and $y^{(i)}_j$ such that $|z^{(i)}_k - sy^{(i)}_j-(z_k-sy_j)|<2^{-L_i-1}$. Further suppose that $x^{(i)}$ is an approximation of $x\in X$ to an absolute error of $2^{-L_i-1}$, and define 
$X^{(i)}_j := \{x \in X : |z^{(i)}_k + sy^{(i)}_j - x^{(i)}| < 2^{-L_i}\}.$ Furthermore, let $M_0:=Y$ and $M_{i+1} := M_i \cap \{y_j \mid X^{(i+1)}_j \neq \emptyset\}$.

We first show that $X^{(i)}_j = \emptyset$ if $L_i > B_{y_j - y_{j_k}}+2$ and $j \neq j_k$.
Indeed, in this case, Theorem \ref{prop:2dim_sep_form} implies that
$|z_k - sy_i - x_i|  \ge \frac{ |y_j-y_{j_k}|}{4}$.
Thus, $L_i > B_{y_j - y_{j_k}}+2$ implies that
 $x \notin X^{(i)}_j$.
If $j = j_k$, then
$z_k - sy_j - x_l = x_{l_k} - x_l$,
and therefore, when $L_i > B_{x_{l_k} - x_l}$ for all $l$, we have that $X^{(i)}_{j_k} = \{x_{l_k}\}$.

Together, this shows that for an $L_i$ that satisfies both bounds, $M_i$ contains exactly $y_{j_k}$, and $X^{(i)}_{j_k}$ contains exactly $x_{l_k}$.
By definition, this is the preimage of $z_k$ under $l_s$.

This discussion suggests the following procedure:
For all pairs $(k,j)$, compute $M_{i}$ and $X^{(i)}_j$ (for increasing $i$) until both contain exactly one element. From Proposition~\ref{importantbounds}~(c), we conclude that $i = \O{\log d+\log\tau}$ for any pair $(j,k)$.

For the bit complexity of this approach, observe that 
the values $B_{y_j - y_{j_k}}$ and $B_{x_{l_k} - x_l}$ are bounded by $\softO{d^2 + d\tau}$, so we can approximate $X,Y,Z$ with absolute precision $L_i$ using $\softO{d^3 + d^2\tau}$ bit operations by Proposition \ref{prop:root_comp}, and this will suffice for all $L_i$ that are considered before the procedure terminates, by the bound on $i$.
Computing $X^{(i)}_j$ can be done using binary search on the approximations of $X$, hence requiring $\O{L_i\log d}$ bit operations. As we double $L_i$ in every step, $M_i$ has the desired form after $\log(d)\cdot\softO{B_{y_j - y_{j_k}} + B_{\operatorname{sep}(x_{l_k} )}}$ bit operations.

For all pairs $(k,j)$, this yields a number of bit operations bounded by 
\[
\log(d)\cdot\softO{\sum_k \sum_{j \neq j_k} {B_{y_j - y_{j_k}}} + \sum_{k} B_{\operatorname{sep}(x_{l_k},f)}}.
\]
By part (b) of Proposition~\ref{importantbounds}, the first sum is bounded by $\softO{d^2+d\tau}$, and since 
$B_{\operatorname{sep}(x_{l_k},f)}=\softO{d^2+d\tau}$ for each $k$, 
the total sum is bounded by $\softO{d^3 + d^2\tau}$.
Now, approximating $X$ and $Y$ with precision $2^{-\rho}$ using Proposition \ref{prop:root_comp} yields the final claim.
\end{proof}

\section{Polynomial Systems}

\subsection{Computation of a Separating Form}
In what follows, we consider a polynomial system as in \eqref{def:system}, with $f_i\in\Z[x_1,\ldots,x_n]$ polynomials of magnitude $(d,\tau)$. Let $\mathcal{S}\subset\C^n$ be the set of all complex solutions of this system.

Our model of computation will be augmented with an oracle for elimination polynomials as follows:
Given a linear form $l=x_j+l_{j+1}x_{j+1}+\cdots+l_n x_n$ with integer coefficients, the oracle returns an elimination polynomial $E^l\in\Z[x]$ for the system \eqref{def:system} along $l$. We further denote $\Pi$ as an upper bound on the bit complexity of calling the oracle for a linear form of bitsize $\O{n\log d}$.

In Section~\ref{sec:projection}, we have already seen how to realize an oracle for elimination polynomials by means of resultant computation, where $E^l=R^l$ is the hidden variable resultant of the polynomials $f^*_i$ obtained after the coordinate transformation $x_j\mapsto x_j-\sum_{i\neq j}l_i x_i$. However, since there exist also other ways to compute elimination polynomials (e.g. using Gr\"obner Basis), we decided to keep the following considerations as general as possible.

When calling our oracle for $l=x_i$, we obtain the set $X_i:=V(E^{x_i})$, which 
contains the projections of the solutions in $\mathcal{S}$ on the $i$-th coordinate. Thus, we have
$\mathcal{S} \subseteq \mathcal{G}:=X_1 \times \ldots \times X_n$.
For $I = \{i_1,\ldots,i_j\}$, with $1 \leq i_1 < \ldots < i_j \leq n$ and $1\leq j\leq n$, let $\pi_I:\C^n\mapsto\C^j$ be the projection on the coordinates $I = \{i_1,\ldots,i_j\}$. In addition, for any linear form $l: \C^j \rightarrow \C$, we define $l^I := l \circ \pi_I$. That is, if $l=l_1 x_1+\cdots +l_j x_j$, then $l^I=l_1\cdot x_{i_1}+\cdots+l_j\cdot x_{i_j}$. In analogous manner to the two-dimensional case, we say that a linear form $l:\C^j \rightarrow \C$ is separating for a set $M\subset\C^j$ if $l$ restricted to $M$ is injective.

\begin{lemma} \label{lem:extend_linform}
Let 
$l_1,l_2$
be SLFs for $\pi_{I}(\mathcal{S})$ and $\pi_{J}(\mathcal{S})$, respectively, where $I$ and $J$ are disjoint subsets of $\{1,\ldots,n\}$. Let $s$ be separating for $Y_1 \times Y_2 := V(E^{l_{1}^I}) \times V(E^{l_2^{J}})$, with $E^{l_1^I}$ and $E^{l_2^J}$ elimination polynomials along $l_1^I$ and $l_2^J$, respectively. 
Then, the linear form $l_1^I + s\cdot l_2^J$ 
is separating for $\pi_{I \cup J}(\mathcal{S})$.
\end{lemma}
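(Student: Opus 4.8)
The plan is to prove injectivity of $l_1^I + s\cdot l_2^J$ on $\pi_{I\cup J}(\mathcal{S})$ directly, by taking two solutions that collide under this linear form and deriving a contradiction. First I would fix $p = (p_1,\dots,p_n), q = (q_1,\dots,q_n) \in \mathcal{S}$ and suppose $(l_1^I + s\cdot l_2^J)(p) = (l_1^I + s\cdot l_2^J)(q)$; since $l_1^I$ and $l_2^J$ depend only on the coordinates in $I$ and $J$ respectively, this reads $l_1^I(\pi_I(p)) + s\cdot l_2^J(\pi_J(p)) = l_1^I(\pi_I(q)) + s\cdot l_2^J(\pi_J(q))$. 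The key observation is that $l_1^I(\pi_I(p)) = l_1(\pi_I(p))$ is a value of $E^{l_1^I}$ at a point of $\mathcal{S}$, hence lies in $Y_1 = V(E^{l_1^I})$ because $E^{l_1^I}$ is an elimination polynomial along $l_1^I$ (by Definition~\ref{eliminationpolynomial}, $\mathcal{S}^{l_1^I}\subseteq V(E^{l_1^I})$), and likewise $l_2^J(\pi_J(p)) \in Y_2$, and similarly for $q$. So both $(l_1^I(\pi_I(p)), l_2^J(\pi_J(p)))$ and $(l_1^I(\pi_I(q)), l_2^J(\pi_J(q)))$ are points of the grid $Y_1 \times Y_2$, and the collision says exactly that $l_s$ identifies these two grid points. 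Since $s$ is separating for $Y_1 \times Y_2$, we conclude $l_1^I(\pi_I(p)) = l_1^I(\pi_I(q))$ and $l_2^J(\pi_J(p)) = l_2^J(\pi_J(q))$.

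From here I would invoke that $l_1$ is an SLF for $\pi_I(\mathcal{S})$: the equality $l_1(\pi_I(p)) = l_1(\pi_I(q))$ together with $\pi_I(p), \pi_I(q) \in \pi_I(\mathcal{S})$ and injectivity of $l_1$ on $\pi_I(\mathcal{S})$ forces $\pi_I(p) = \pi_I(q)$, i.e.\ $p$ and $q$ agree on all coordinates in $I$. Symmetrically, $l_2$ being an SLF for $\pi_J(\mathcal{S})$ gives $\pi_J(p) = \pi_J(q)$. Since $I$ and $J$ are disjoint, $I \cup J$ is covered and $p, q$ agree on all coordinates in $I\cup J$, that is $\pi_{I\cup J}(p) = \pi_{I\cup J}(q)$. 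This is precisely injectivity of $l_1^I + s\cdot l_2^J$ as a map on $\pi_{I\cup J}(\mathcal{S})$, completing the argument.

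I do not expect a genuine obstacle here; the proof is essentially a bookkeeping exercise. The one point requiring care is making the identification of $l^I$-values with points of $V(E^{l^I})$ precise: one must check that $l_1^I$ restricted to $\mathcal{S}$ is the same as $l_1$ restricted to $\pi_I(\mathcal{S})$ precomposed with $\pi_I$, which is immediate from the definition $l^I = l\circ\pi_I$, and that $E^{l_1^I}$ is indeed an elimination polynomial for the system along the linear form $l_1^I$ on $\C^n$ (so that its vanishing set contains $\mathcal{S}^{l_1^I} = (l_1\circ\pi_I)(\mathcal{S}) = l_1(\pi_I(\mathcal{S}))$). A second subtlety is ensuring the disjointness of $I$ and $J$ is used correctly so that agreeing on $I$ and on $J$ really does give agreement on $I\cup J$ — but this is trivial once stated. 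Thus the whole proof reduces to chaining the separating property of $s$ for the two-dimensional grid $Y_1\times Y_2$ with the two SLF hypotheses.
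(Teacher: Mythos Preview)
Your proof is correct and is precisely the unpacking of the paper's own one-line argument (``This follows directly from the definitions and the choice of $l$ and $s$''). You have simply written out in detail what the paper leaves implicit, with no deviation in approach.
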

\begin{proof}
This follows directly from the definitions and the choice of $l$ and $s$.
\end{proof}

Following a divide and conquer strategy, we can now recursively compute an SLF for $\mathcal{S}$ starting with the projections of $\mathcal{S}$ on each of the coordinates $x_i$. We give details:
For simplicity, suppose that $n$ is a power of two. Write $X[i,j] := \pi_{\{i,\ldots,j\}}(\mathcal{S})$ 
and consider the complete binary tree with root $X[1,n]$, and each node $X[i,j]$ with $|i - j| \ge 1$ having children $X[i,(i+j-1)/2],X[(i+j+1)/2,j]$. We aim to compute SLFs for the set at the respective node without actually computing this set. 
First, for each $i=1,\ldots,n$, we compute $X_i=V(E^{x_i}) \supset X[i]$
by querying the oracle
for $E^{x_i}$, and then computing its roots.
Then, for $i=1,\ldots,n/2$, Theorem \ref{prop:2dim_sep_form} yields an SLF $x+s_iy$, with $s_i \leq d^{4n}$, for $X_{2i-1}\times X_{2i}$, and thus also for $X[2i-1,2i]$.
For the inductive step, assume we can compute SLFs for the sets at all nodes in the levels $1$ to $j$ of the tree, and consider some node on level $j+1$, say w.l.o.g $X[1,2^{j}]$ with children $X[1,2^{j-1}]$ and $X[2^{j-1}+1,2^j]$.
Let $l_1=\sum_{i=1}^{2^{j-1}} l_i' x_i$ and $l_2=\sum_{i=1}^{2^{j-1}} l_i'' x_{i}$ be SLFs for these sets, respectively, and suppose their coefficients have absolute values bounded by $d^{4n(j-1)}$.
We obtain $E^{l_1^I}$ and $E^{l_2^J}$ by calling the oracle twice, 
where $I=\{1,\ldots,2^{j-1}\}$ and $J=\{2^{j-1}+1,\ldots,2^j\}$. 
Again, Theorem \ref{prop:2dim_sep_form} yields a separating form $x+sy$ for 
$V(E^{l_1^I})\times V(E^{l_2^J})$.
By Lemma~\ref{lem:extend_linform}, the linear form $l=\sum\nolimits_{i=1}^{2^{j-1}} l_i' x_i+s\cdot \sum\nolimits_{i=1}^{2^{j-1}} l''_{i} x_{2^{j-1}+i}$ is separating for $X[1,2^j]$.
Since the absolute values of the coefficients increase by a factor 
of  
at most $d^{4n}$, $l$ has coefficients of absolute value at most $d^{4nj}$. Hence, after $\log n$ recursive steps, we obtain am SLF for $\mathcal{S}=X[1,n]$ with coefficients of absolute value $d^{4n\log n}$ or less.

\begin{theorem}
The above algorithm computes an SLF for $\mathcal{S}$ with integer coefficients bounded by $d^{4n\log n}$ using
\[
\softO{n(D^3+D^2L)+n\cdot\Pi},
\]
bit operations,
where $(D,L)$ is an upper bound on the magnitude of all elimination polynomials produced by the algorithm, and $\Pi$ is an upper bound on the bit complexity of calling the oracle for a linear form $l$ of bitsize $\O{n\log d}$.
The algorithm is deterministic if the oracle is deterministic. 

If we use the Las Vegas method from Section~\ref{sec:projection} for resultant computation to realize the oracle, then the above bound transforms into the bound (\ref{mainbound2}) from the introduction.
\end{theorem}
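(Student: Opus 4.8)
The plan is to verify correctness and then tally the bit complexity by following the divide-and-conquer tree described right before the theorem statement. For \textbf{correctness}, I would argue by induction on the tree levels that the linear form $l$ computed at node $X[i,j]$ is indeed an SLF for $X[i,j]=\pi_{\{i,\ldots,j\}}(\mathcal{S})$. The base case is two-fold: at the leaves, $x_i$ trivially separates the one-coordinate projection $X[i]$; at level one, Theorem~\ref{prop:2dim_sep_form} produces $s_i$ separating for $X_{2i-1}\times X_{2i}\supseteq X[2i-1,2i]$, and a form separating a superset separates the subset. For the inductive step, given SLFs $l_1,l_2$ for the two children $\pi_I(\mathcal{S})$ and $\pi_J(\mathcal{S})$ with $I,J$ disjoint, I invoke the oracle to get elimination polynomials $E^{l_1^I}, E^{l_2^J}$, apply Theorem~\ref{prop:2dim_sep_form} to the grid $V(E^{l_1^I})\times V(E^{l_2^J})$ to obtain $s$, and then Lemma~\ref{lem:extend_linform} immediately gives that $l_1^I+s\cdot l_2^J$ separates $\pi_{I\cup J}(\mathcal{S})$. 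After $\log n$ levels this yields an SLF for $X[1,n]=\mathcal{S}$.

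For the \textbf{coefficient bound}, I track the growth through the recursion: at level one the coefficients are at most $d^{4n}$ by Theorem~\ref{prop:2dim_sep_form} (here $D\le d^n$, so $D^4\le d^{4n}$), and each inductive step multiplies $l_2$ by a fresh $s\le D^4\le d^{4n}$, hence the coefficient bound is multiplied by at most $d^{4n}$ per level. After $\log n$ levels we get coefficients bounded by $d^{4n\log n}$, which in particular are of bitsize $\O{n\log d\cdot\log n}=\softO{n\log d}$, so every oracle call is on a linear form of the size promised in the statement of $\Pi$.

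For the \textbf{complexity}, I would count the work per node and multiply by the number of nodes. There are $\O{n}$ nodes in the tree. At each internal node we (i) make $\O{1}$ oracle calls, each costing $\O{\Pi}$; (ii) compute the roots of the resulting elimination polynomials $E^{l_1^I}, E^{l_2^J}$ of magnitude $(D,L)$ with Proposition~\ref{prop:root_comp} at cost $\softO{D^3+D^2L+D\rho}$ for a suitable $\rho$; and (iii) run the two-dimensional separating-form algorithm of Theorem~\ref{prop:2dim_sep_form} on the grid $V(E^{l_1^I})\times V(E^{l_2^J})$, which by that theorem costs $\softO{D^3+D^2L}$ (applying it with the two polynomials of magnitude $(D,L)$ and $c=(nd)^{\O{1}}$ of size $D^{\O{1}}$). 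Summing these over the $\O{n}$ nodes gives $\softO{n(D^3+D^2L)+n\cdot\Pi}$, absorbing the $D\rho$ terms into $\softO{D^3+D^2L}$ by choosing $\rho=\softO{D^2+DL}$, which suffices because the required approximation quality for the isolating disks fed into Theorem~\ref{prop:2dim_sep_form} is governed by the root separations of the elimination polynomials and hence is polynomially bounded in their magnitude. Determinism of the whole procedure reduces to determinism of the oracle, since Theorem~\ref{prop:2dim_sep_form} and the root isolation of Proposition~\ref{prop:root_comp} are deterministic.

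The main obstacle I anticipate is not the bookkeeping but making precise the claim that a \emph{fixed}, polynomially bounded approximation quality $\rho$ for the roots of the elimination polynomials suffices as input to the two-dimensional algorithm across \emph{all} levels of the recursion simultaneously: one must be careful that the grid $V(E^{l_1^I})\times V(E^{l_2^J})$ at a high level of the tree still has its relevant fractions $\nu/\delta$ controlled by the magnitude $(D,L)$, so that the relative precision $\rho=\log(64D^4c)$ demanded inside the proof of Theorem~\ref{prop:2dim_sep_form} is attainable within the stated budget. Once this is pinned down --- essentially by noting that $E^{l_1^I}$ and $E^{l_2^J}$ are themselves integer polynomials of magnitude at most $(D,L)$, so Proposition~\ref{importantbounds} applies to them verbatim --- the remaining steps are the routine induction and summation sketched above, and the translation of $(D,L,\Pi)$ into the explicit numbers of \eqref{mainbound2} is exactly the substitution from Corollary~\ref{cor:main} together with $D\le d^n$ and $L=\softO{(nd)^{n-1}(nd+\tau)}$.
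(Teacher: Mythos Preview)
Your proposal is correct and follows the same approach as the paper: the paper's own proof simply observes that the oracle is called $2n-1$ times and Theorem~\ref{prop:2dim_sep_form} is invoked $n-1$ times, then substitutes the bounds from Corollary~\ref{cor:main} for the second claim. The correctness argument and the coefficient-growth tracking you spell out were already established in the text preceding the theorem, so the paper does not repeat them.

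One remark on your anticipated obstacle: it is a non-issue, because you have slightly misread the interface of Theorem~\ref{prop:2dim_sep_form}. That theorem takes as input the \emph{polynomials} $f,g$ of magnitude $(D,L)$, not pre-computed approximate roots; it handles all root approximation internally (via Lemma~\ref{lem:approx}) and delivers the separating $s$ within its stated $\softO{D^3+D^2L}$ budget. Hence your step~(ii) is subsumed by step~(iii), and there is no need to fix a global precision $\rho$ across levels --- each invocation of Theorem~\ref{prop:2dim_sep_form} is a self-contained black box on the two elimination polynomials at that node.
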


\begin{proof}
The oracle is called $2n-1$ times, and Theorem~\ref{prop:2dim_sep_form} is invoked $n-1$ times. 
For the second claim, 
suppose that the oracle is realized by means of a resultant computation as proposed in Section~\ref{sec:projection}. Then, using Corollary~\ref{cor:main} (a), 
we see that each computation of an elimination polynomial $E^l$ needs 
$\Pi=\softO{n^{(n-1)(\omega+1)}(nd+\tau)d^{(\omega+2)n-\omega-1}}$
bit operations in expectation as $l$ has bitsize $\O{n\log d}$. In addition, the magnitude of each elimination polynomial is bounded by $(d^n,\softO{(nd)^{n-1}(nd+\tau)})$, which shows the second claim.
\end{proof}

We can also slightly modify the above algorithm to compute a sufficiently large set of SLFs from which we can then choose a linear form such that $R^l$ is a strong elimination polynomial. For this, we assume that the oracle is realized by means of resultant computation, that is, we have $E^l=R^l$. Now, suppose that SLFs $l_1=\sum_{i=1}^{n/2} l_i' x_i$ and $l_2=\sum_{i=1}^{n/2} l_i'' x_{i}$ for the sets $X[1,n/2]$ and $X[n/2+1,n]$ are computed as above. In addition, let $E^{l_1^I}$ and $E^{l_2^J}$ be the elimination polynomials along $l_1^I$ and $l_2^J$, with $I=\{1,\ldots,n/2\}$ and $J=\{n/2+1,\ldots,n\}$. By Theorem~\ref{prop:2dim_sep_form}, we can compute a set $S^*:=\{s^*,\ldots,s^*+2nd\}\subset\{1,\ldots,2nd\cdot d^{4n}\}$ such that $x+sy$ is separating for $V(E^{l_1^I})\times V(E^{l_2^J})$ for all $s\in S^*$. This costs at most $\tilde{O}(d^{3n}+d^{2n}(nd)^{n-1}(nd+\tau))$ bit operations, and the linear form $l(s)=\sum\nolimits_{i=1}^{2^{j-1}} l_i' x_i+s\cdot \sum\nolimits_{i=1}^{2^{j-1}} l''_{i} x_{2^{j-1}+i}$ is separating for $\mathcal{S}$ for each $s\in S^*$. We conclude:

\begin{theorem}
There is a Las Vegas algorithm with expected bit complexity \eqref{mainbound2}
to compute a set $S^*:=\{s^*,\ldots,s^*+2nd\}\subset\{1,\ldots,2nd\cdot d^{4n}\}$ and a linear form $l(s)=\sum_{i=1}^n (a_{0i}+a_{1i}s)\cdot x_i\in\Z[s]$ of bitsize $\softO{n\log d}$, such that 
$l(s)$ is separating for all $s\in S^*$.
\end{theorem}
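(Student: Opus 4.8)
The plan is to essentially re-run the divide-and-conquer algorithm from the previous theorem, but to defer the final ``pin down one separating $s$'' step and instead retain a whole interval $S^*$ of valid values for the last merge. First I would recall that the algorithm builds, level by level along the binary tree on $X[1,n]$, separating linear forms $l_1$ for $X[1,n/2]$ and $l_2$ for $X[n/2+1,n]$ with integer coefficients of absolute value at most $d^{4n(\log n - 1)}\le d^{4n\log n}$, hence of bitsize $\softO{n\log d}$; by the earlier analysis this entire partial computation runs within the expected bound \eqref{mainbound2}, since it queries the oracle $2n-3$ times and invokes Theorem~\ref{prop:2dim_sep_form} at most $n-2$ times.

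Next I would call the oracle on the two linear forms $l_1^I$ and $l_2^J$, with $I=\{1,\ldots,n/2\}$ and $J=\{n/2+1,\ldots,n\}$, to obtain elimination polynomials $E^{l_1^I}$ and $E^{l_2^J}$; each has magnitude at most $(d^n,\softO{(nd)^{n-1}(nd+\tau)})$ and each oracle call costs $\softO{\Pi}$, so this step stays within \eqref{mainbound2}. Then I apply Theorem~\ref{prop:2dim_sep_form} to the grid $V(E^{l_1^I})\times V(E^{l_2^J})$ with the constant $c:=2nd+1$, which is of size $d^{\O{1}}$; this produces in $\softO{D^3+D^2L}=\softO{d^{3n}+d^{2n}(nd)^{n-1}(nd+\tau)}$ bit operations an integer $s^*\in\{1,\ldots,(D^4-1)c\}\subseteq\{1,\ldots,2nd\cdot d^{4n}\}$ such that $l_s=x+sy$ is separating for that grid, and hence for $\pi_I(\mathcal S)\times\pi_J(\mathcal S)\subseteq V(E^{l_1^I})\times V(E^{l_2^J})$, for \emph{every} $s\in S^*=\{s^*,\ldots,s^*+2nd\}$. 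By Lemma~\ref{lem:extend_linform}, applied with the disjoint index sets $I$ and $J$, the form $l_1^I+s\cdot l_2^J$ is separating for $\pi_{I\cup J}(\mathcal S)=\mathcal S$ for each such $s$; writing $l_1=\sum_{i\le n/2}l_i'x_i$ and $l_2=\sum_{i\le n/2}l_i''x_i$ and collecting coefficients, this is exactly a form $l(s)=\sum_{i=1}^n(a_{0i}+a_{1i}s)x_i$ with integer coefficients $a_{0i},a_{1i}$ of absolute value at most $d^{4n\log n}$, i.e.\ of bitsize $\softO{n\log d}$, which yields the claimed output.

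Finally I would total up the cost: the partial divide-and-conquer prefix and the two extra oracle calls each fit in \eqref{mainbound2} (this is where the bulk of the cost lies, dominated by hidden-variable resultant computation via Proposition~\ref{complexity:resultant}), and the single extra invocation of Theorem~\ref{prop:2dim_sep_form} with $c=2nd+1$ costs only $\softO{D^3+D^2L}$, which is absorbed into \eqref{mainbound2}. The main thing to be careful about is bookkeeping rather than a genuine obstacle: one must check that enlarging the target interval in the bivariate step from a single point to $c=2nd+1$ consecutive integers does not blow up $s^*$ beyond $2nd\cdot d^{4n}$ (it does not, since $(d^{4n}-1)(2nd+1)<2nd\cdot d^{4n}$ after adding the width $2nd$), and that the coefficient bound $d^{4n\log n}$ from the deterministic-SLF theorem is preserved verbatim since we only perform $\log n - 1$ full levels of doubling before the final parametrized merge, which itself contributes only the benign factor $s\le 2nd\cdot d^{4n}\le d^{4n\log n}$ to the coefficients.
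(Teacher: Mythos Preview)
Your proposal is correct and follows essentially the same approach as the paper: run the divide-and-conquer algorithm down to the two halves $X[1,n/2]$ and $X[n/2+1,n]$, obtain the two elimination polynomials, and then invoke Theorem~\ref{prop:2dim_sep_form} with $c=2nd+1$ at the top-level merge to produce an entire interval $S^*$ of separating values rather than a single one, with Lemma~\ref{lem:extend_linform} supplying the conclusion. One tiny bookkeeping slip: the inequality $(d^{4n}-1)(2nd+1)+2nd<2nd\cdot d^{4n}$ is actually off by an additive $d^{4n}-1$ (and the paper's stated containment $S^*\subset\{1,\ldots,2nd\cdot d^{4n}\}$ is itself slightly loose in the same way), but this does not affect the $\softO{n\log d}$ bitsize bound, which is all that matters for the complexity claim.
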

  
Combining Lemma~\ref{lem:conditions} and Lemma~\ref{lem:direction} directly yields the following result:	
	
\begin{cor}
Suppose that the system (\ref{def:system}) has no solution at infinity. Then, there is a Las Vegas algorithm with expected bit complexity 
(\ref{mainbound2}) to compute an SLF $l$ with coefficients of bitsize $\O{n\log d}$ and the corresponding hidden-variable resultant $R^l$, such that $V(R^l)=\mathcal{S}^l$.
\end{cor}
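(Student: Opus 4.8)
The plan is to feed the parametrized separating form produced by the preceding Theorem into Lemma~\ref{lem:direction}, and then check that the two hypotheses of Lemma~\ref{lem:conditions} hold simultaneously for the resulting transformed system. First I would invoke the preceding Theorem to obtain, within the expected bound (\ref{mainbound2}), a set $S^*=\{s^*,\ldots,s^*+2nd\}\subset\{1,\ldots,2nd\cdot d^{4n}\}$ together with the linear form $l(s)=\sum_{i=1}^n (a_{0i}+a_{1i}s)\,x_i$ of bitsize $\softO{n\log d}$ that is separating for $\mathcal{S}$ for every $s\in S^*$. Reading off the recursive construction, the coefficient of $x_1$ in $l(s)$ is identically $1$, and the final merge step multiplies a (nonzero) SLF for the "right half'' by $s$, so at least one $a_{k1}$ with $k\ge 2$ is nonzero; hence $l(s)$ has exactly the shape required in Lemma~\ref{lem:direction}, with $\Lambda:=S^*$ (of size $2nd+1\ge 2nd$) and $\mu=\softO{n\log d}$.

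Second, I would apply the Las Vegas algorithm of Lemma~\ref{lem:direction} to this $\Lambda$; it returns an $s^*\in S^*$ together with the transformed polynomials $f_i^*=f_i(x_1-\sum_{j\ge 2}l_j(s^*)x_j,x_2,\ldots,x_n)$ such that each $f_i^*$ contains a term of total degree $d_i$ not depending on $x_1$. This is exactly the second requirement of Lemma~\ref{lem:conditions} for the transformed system (\ref{def:systemtr}). For the first requirement, note that the coordinate change $x_1\mapsto x_1-\sum_{j\ge 2}l_j(s^*)x_j$ acts as an invertible linear map on $(x_1,\ldots,x_n)$ and fixes $x_{n+1}$, hence induces a bijection on the locus $x_{n+1}=0$ of the homogenization (\ref{def:systemp}); since (\ref{def:system}) has no solution at infinity by hypothesis, neither does the homogenization of (\ref{def:systemtr}). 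Thus both hypotheses of Lemma~\ref{lem:conditions} are satisfied for (\ref{def:systemtr}), and we obtain $V\!\left(\RR^{x_1}(f_1^*,\ldots,f_n^*)\right)=\pi_{x_1}(\mathcal{S}')$, where $\mathcal{S}'$ is the solution set of (\ref{def:systemtr}). Because the coordinate change carries a solution $\eta\in\mathcal{S}$ to the point with first coordinate $l(s^*)(\eta)=\pi_{l}(\eta)$, this projection is precisely $\mathcal{S}^{l}$ for $l:=l(s^*)$; setting $R^l:=\RR^{x_1}(f_1^*,\ldots,f_n^*)$ we get $V(R^l)=\mathcal{S}^l$, and $l$ is separating for $\mathcal{S}$ since $s^*\in S^*$.

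Third, I would account for the cost. The call to the preceding Theorem is within (\ref{mainbound2}); the call to Lemma~\ref{lem:direction} costs $\softO{d^3(d\mu+\tau)(2d)^n}$ with $\mu=\softO{n\log d}$, which is dominated by (\ref{mainbound2}); and computing $R^l$ amounts to computing the hidden-variable resultant of the $f_i^*$, so by Corollary~\ref{cor:main}~(a) it costs $\softO{\Pi(d,\tau,n,\mu)}$, again within (\ref{mainbound2}). Each of the three steps is Las Vegas, so the whole procedure is a Las Vegas algorithm of expected bit complexity (\ref{mainbound2}), and the output $l$ has coefficients of bitsize $\O{n\log d}$.

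I expect the only genuinely delicate point to be the verification that the parametrized form $l(s)$ really meets the structural hypothesis of Lemma~\ref{lem:direction} --- that the coefficient of $x_1$ is the constant $1$ while some coefficient of an $x_k$ with $k\ge 2$ depends nontrivially on $s$ --- which has to be argued from the recursive construction in Section~\ref{sec:projection} and the preceding Theorem. Everything else, namely the transfer of the "no solution at infinity'' property across an invertible linear change fixing $x_{n+1}$ and the bookkeeping of the complexity, is routine.
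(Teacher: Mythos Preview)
Your proposal is correct and follows exactly the approach the paper indicates: the paper states the corollary with the one-line justification ``Combining Lemma~\ref{lem:conditions} and Lemma~\ref{lem:direction} directly yields the following result,'' and you have spelled out precisely how that combination works---feeding the parametrized form $l(s)$ from the preceding Theorem into Lemma~\ref{lem:direction} with $\Lambda=S^*$, then invoking Lemma~\ref{lem:conditions}. Your verification that $l(s)$ has leading coefficient $1$ on $x_1$ and a nonzero $a_{k1}$ for some $k\ge 2$ (indeed $a_{n/2+1,1}=1$ from the recursive construction), and that the absence of infinite solutions is preserved under the invertible linear change fixing $x_{n+1}$, are exactly the details the paper leaves implicit.
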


\subsection{Computing the Solutions}
We first consider the case where (\ref{def:system}) has no solution at infinity. By the last subsection, we may assume that, for $j=1,\ldots,\log n$ and $k=0,\ldots,n/2^j-1$, we have already computed 
SLFs $l_{j,k}=\sum_{i=1}^{2^j}l^{(j,k)}_i x_i$ for the sets $X[k\cdot 2^{j}+1,(k+1)\cdot 2^{j}]$, 
respectively. We may further assume that $l=l_{\log n,0}$ is separating for the solutions of our system and that 
$R^l$ is a strong elimination polynomial. Notice that $l^{(j,k)}_1=1$ for all $(j,k)$ and each coefficient 
$l^{(j,k)}_i$ is an integer of bit size $\softO{n\log d}$. Due to the construction of the $l_{j,k}$'s, it holds that
\begin{align}\label{formula recursive}
l_{j,k}(\mathbf{x},\mathbf{y}) = l_{j-1,2k+1}(\mathbf{x}) + s_{j,k}\cdot l_{j-1,2k+2}(\mathbf{y}).
\end{align}
with integers $s_{j,k}$ of bitsize $\softO{n\log d}$, $\mathbf{x}=(x_1,\ldots,x_{2^{j-1}})$ and $\mathbf{y}=(y_1,\ldots,y_{2^{j-1}})$.
Let 
\[
\phi_{j,k}:(x_1,\ldots,x_n)\mapsto \sum\nolimits_{i=1}^{2^j}l^{(j,k)}_i x_{2^j\cdot k+i}
\] be the mapping induced by the linear form $l_{j,k}$, that is, $\phi_{j,k}$ only operates on the variables $x_{2^jk+1}$ to $x_{2^{j+1}}$. We further define $\phi_{0,k}:(x_1,\ldots,x_n)\mapsto x_{k+1}$ as the projection onto the $k+1$-th coordinate for all $k$, and
\[
\phi_j:=\phi_{j,0}\times\cdots\times\phi_{j,n/2^j-1}
\]
as the cartesian product of all $\phi_{j,k}$ for a fixed $j$.

Now, we recursively apply Lemma \ref{lem:2dim_reconstruct} to compute the image $\Phi_j:=\phi_j(\mathcal{S})$ of $\mathcal{S}$ under $\phi_j$.
Notice that $\mathcal{G}=\Phi_{0}$ and $V(R^l)=\mathcal{S}^l=\Phi_{\log n}$ as $R^l$ is a strong elimination polynomial.  
Suppose that $\Phi_j$ is already computed for some $j$, in particular, we know the image $\Phi_{j,k}=\phi_{j,k}(\mathcal{S})$ of $\mathcal{S}$ under the mapping $\phi_{j,k}$. Further notice that the mapping $\phi:(x,y)\mapsto x+s_{j,k}y$ is injective on the product $\Phi_{j-1,2k+1}\times\Phi_{j-1,2k+2}$, and that it maps the image of $\mathcal{S}$ under $\phi_{j-1,2k+1}\times\phi_{j-1,2k+2}$ one-to-one onto $\Phi_{j,k}$.
Hence, using Lemma \ref{lem:2dim_reconstruct}, we may compute the inverse of each point in $\Phi_{j,k}$ under the mapping $\phi$, which yields $(\phi_{j-1,2k+1}\times\phi_{j-1,2k+2})(\mathcal{S})$.
Thus, after $\log n$ recursive steps, we obtain $\mathcal{S}$.

\begin{theorem} \label{thm:approx_noinf}
If the system \eqref{def:system} has no solutions at infinity, then the above algorithm computes approximations (in terms of isolating regions) of absolute quality $\rho$ of $\mathcal{S}$
using 
$$\softO{n^{(n-1)(\omega+1)+1}(nd+\tau)d^{(\omega+2)n-\omega-1}+n d^n\rho}$$
bit operations in expectation.
\end{theorem}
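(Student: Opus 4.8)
The plan is to assemble the complexity bound from two contributions: the cost of the oracle calls and root computations (already accounted for in establishing the SLF and the strong elimination polynomial, bounded by \eqref{mainbound2}), and the cost of the $\log n$ recursive applications of Lemma~\ref{lem:2dim_reconstruct} that reconstruct $\mathcal{S}$ from $V(R^l)$. By assumption we may start with the SLFs $l_{j,k}$, their defining separating integers $s_{j,k}$, and the strong elimination polynomial $R^l$ already in hand; by the preceding theorems this setup costs \eqref{mainbound2} in expectation, so it suffices to show that the reconstruction phase stays within the same bound plus the unavoidable $nd^n\rho$ term for the final approximation quality.

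First I would fix a level $j$ and a block index $k$ and apply Lemma~\ref{lem:2dim_reconstruct} to the situation described in the text: the map $\phi:(x,y)\mapsto x+s_{j,k}y$ is separating on $\Phi_{j-1,2k+1}\times\Phi_{j-1,2k+2}$, and it sends $(\phi_{j-1,2k+1}\times\phi_{j-1,2k+2})(\mathcal{S})$ one-to-one onto $\Phi_{j,k}\subset V(E^{l_{j,k}})$. To invoke the lemma I need the two factor polynomials and the target polynomial to be of magnitude $(D,L)$ with $D\le d^n$ and $L=\softO{(nd)^{n-1}(nd+\tau)}$, which is exactly the bound on elimination polynomials from Corollary~\ref{cor:main} (with $\mu=\O{n\log d}$), and I need the separating integer $s_{j,k}$ to lie in the admissible range $\{1,\ldots,D^4c\}$, which holds since $s_{j,k}$ has bitsize $\softO{n\log d}$. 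Lemma~\ref{lem:2dim_reconstruct} then yields, for each point of $\Phi_{j,k}$, its $\phi$-preimage to absolute quality $\rho'$ using $\softO{D^3+D^2L+D\rho'}$ bit operations \emph{per call}; but the cost in the lemma is really amortized over all points of the target set simultaneously, so one run of the reconstruction across the at most $d^n$ points of $V(R^l)$ in a single block costs $\softO{D^3+D^2L+D\rho'}$ with $D=d^n$. Substituting $D^3=d^{3n}$, $D^2L=\softO{d^{2n}(nd)^{n-1}(nd+\tau)}$, one checks this is dominated by \eqref{mainbound2}, since the $d$-exponent $3n$ there is smaller than $(\omega+2)n-\omega-1$ for all $\omega\ge 2$ and $n\ge 2$.

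Next I would count the number of such reconstruction calls: there are $\log n$ levels, and at level $j$ there are $n/2^j$ blocks, so at most $n$ blocks in total; hence the dominant-term contribution is $n$ times the per-block bound, i.e. $\softO{n(d^{3n}+d^{2n}(nd)^{n-1}(nd+\tau))}$, absorbed into \eqref{mainbound2}. The only subtlety is the required intermediate precision: to recover $\Phi_{j-1,\cdot}$ to quality $\rho$ one must feed Lemma~\ref{lem:2dim_reconstruct} with target points known to somewhat higher precision, and this precision loss could in principle compound over the $\log n$ levels. I would argue it does not blow up: each application of the lemma amplifies the needed input precision by only an additive $\softO{D^2+DL}$ term (the "$B$-values" appearing in its proof, bounded via Proposition~\ref{importantbounds}) plus a constant factor, so after $\log n$ levels the total precision demanded at the top is $\rho+\softO{\log n\cdot(D^2+DL)}=\rho+\softO{(nd)^{n-1}(nd+\tau)}$, and the extra $D\cdot\softO{(nd)^{n-1}(nd+\tau)}$ work this induces is again $\softO{d^n(nd)^{n-1}(nd+\tau)}$, dominated by \eqref{mainbound2}. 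Combining the setup cost, the reconstruction cost, and the genuinely $\rho$-dependent term $\softO{nd^n\rho}$ (which is $n$ blocks times the $\softO{D\rho}=\softO{d^n\rho}$ per block) gives exactly the claimed bound.

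The main obstacle I expect is the precision-propagation bookkeeping: I must verify that running Lemma~\ref{lem:2dim_reconstruct} "backwards" from $V(R^l)$ down to the coordinate projections does not require the target precision at intermediate levels to grow multiplicatively in $\log n$, which would spoil the clean additive $nd^n\rho$ term. The resolution is that the separating integers $s_{j,k}$ are all of bitsize $\softO{n\log d}$ and the intermediate sets are subsets of root sets of polynomials of controlled magnitude $(D,L)$, so the separation and absolute-value bounds from Proposition~\ref{importantbounds} make every precision increment additive and bounded by $\softO{D^2+DL}$; everything else is a routine comparison of $d$-exponents against $(\omega+2)n-\omega-1$.
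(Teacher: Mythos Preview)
Your argument is correct and the core of it---counting at most $n$ invocations of Lemma~\ref{lem:2dim_reconstruct}, each applied to elimination polynomials of magnitude $(d^n,\softO{(nd)^{n-1}(nd+\tau)})$, and verifying that the resulting $\softO{n(D^3+D^2L)}$ is dominated by \eqref{mainbound2}---is exactly the paper's proof. The one place you diverge is the treatment of the $\rho$-dependent term: you thread the target precision backwards through the $\log n$ reconstruction levels and carefully argue that the precision increments stay additive and bounded by $\softO{D^2+DL}$. The paper bypasses this bookkeeping entirely: once the reconstruction has \emph{identified} each coordinate of each solution with a specific root of the corresponding $R^{x_i}$, it simply re-approximates those roots directly to quality $\rho$ via Corollary~\ref{cor:main}~(b), which costs $\softO{d^n\rho}$ per coordinate axis and hence $\softO{nd^n\rho}$ total. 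Your approach is sound but does more work than necessary; the paper's observation that the reconstruction need only be run at a fixed (combinatorial) precision, with the final $\rho$-refinement decoupled and pushed to the leaf polynomials $R^{x_i}$, is the cleaner route and dissolves what you flagged as ``the main obstacle.''
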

\begin{proof}
There are $\log n$ levels to be considered,
so we employ Lemma~\ref{lem:2dim_reconstruct} at most $n$ times. The involved polynomials are elimination polynomials along the linear forms $l_{j,k}$, which are of 
magnitude $(d^n,\softO{(nd)^{n-1}(nd + \tau)})$. This yields the bound from (\ref{mainbound2}) for reconstructing all solutions of the given system. We can now compute absolute approximations of quality $\rho$ of these solutions by computing corresponding approximations of the roots of the polynomials $R^{x_i}$. Hence, the claimed bound follows directly from Corollary~\ref{cor:main} (b).
\end{proof}

We now remove the condition on the input system to have no infinite solution. We can easily check whether this condition is fulfilled. Namely, (\ref{def:systemp}) has no infinite solution if and only if $\RR(\bar{F}_1,\ldots,\bar{F}_n)\neq 0$. The following Lemma shows that this can be achieved, with probability at least $1/2$, by means of a coordinate transformation.

\begin{lemma} \label{lem:remove_infinity}
Let $\lambda_1$ to $\lambda_n$ be a randomly chosen non-negative integers with $\lambda_i \le 2d^n$ for all $i$. Then, with probability at least $1/2$, the transformed system 
\begin{align}\label{def:systempstar}
F_1^*(x_1,\ldots,x_{n+1})=\cdots=F_{n}^*(x_1,\ldots,x_{n+1}) = 0,
\end{align}
with $F_i^*(x_1,\ldots,x_{n+1})=F_i(x_1,\ldots,x_n,x_{n+1}+\lambda_1 x_1+\cdots \lambda_n x_n)$, has no solution at infinity. 
There is a Las-Vegas algorithm to compute such $\lambda_i$'s and the polynomials $F_i^*$ with expected bit complexity bounded by (\ref{mainbound2}).
\end{lemma}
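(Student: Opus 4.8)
The plan is to separate the geometry from the algorithmic bookkeeping. First I would make explicit what a solution at infinity of \eqref{def:systempstar} is: setting $x_{n+1}=0$ in $F_i^{*}$ gives the homogeneous form $\bar F_i^{*}(x_1,\ldots,x_n)=F_i(x_1,\ldots,x_n,\lambda_1x_1+\cdots+\lambda_nx_n)$ of degree $d_i$ in $x_1,\ldots,x_n$, and by the criterion recalled just before the lemma (applied to the transformed system), \eqref{def:systempstar} has a solution at infinity iff $\RR(\bar F_1^{*},\ldots,\bar F_n^{*})=0$, i.e.\ iff the $\bar F_i^{*}$ have a common nontrivial zero $[\xi_1:\cdots:\xi_n]\in\PP^{n-1}$. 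Unwinding the substitution, such a zero corresponds precisely to the point $[\xi_1:\cdots:\xi_n:\sum_j\lambda_j\xi_j]\in\PP^{n}$ being a solution of the original homogeneous system \eqref{def:systemp}. Since \eqref{def:systemp} has, by B\'ezout, only finitely many projective solutions $P_1,\ldots,P_N$ with $N\le B\le d^{n}$, this yields the clean characterisation: \eqref{def:systempstar} has a solution at infinity iff some $P_m=[p_1^{(m)}:\cdots:p_{n+1}^{(m)}]$ satisfies $p_{n+1}^{(m)}=\sum_{j=1}^{n}\lambda_jp_j^{(m)}$, i.e.\ the affine-linear form $h_m(\lambda):=\sum_{j=1}^{n}\lambda_jp_j^{(m)}-p_{n+1}^{(m)}$, well defined up to a nonzero scalar, vanishes at $\lambda=(\lambda_1,\ldots,\lambda_n)$.

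Next I would observe that no $h_m$ is the zero polynomial: if $p_1^{(m)}=\cdots=p_n^{(m)}=0$ then $P_m=[0:\cdots:0:1]$ and, after rescaling, $h_m\equiv-1$ (so $[0:\cdots:0:1]$ never forces a solution at infinity, consistently with $[\xi_1:\cdots:\xi_n]\in\PP^{n-1}$); otherwise $h_m$ is a genuinely nonconstant affine-linear polynomial, whose zero set meets the grid $\{0,1,\ldots,2d^{n}\}^{n}$ in a fraction at most $1/(2d^{n}+1)$ of its points. A union bound over the at most $N\le d^{n}$ forms $h_m$ then shows that, when the $\lambda_i$ are drawn independently and uniformly from $\{0,1,\ldots,2d^{n}\}$, all $h_m(\lambda)$ are nonzero --- equivalently, \eqref{def:systempstar} has no solution at infinity --- with probability at least $1/2$. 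This proves the first assertion.

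For the algorithm I would simply make this test effective and iterate. One round: sample $\lambda_1,\ldots,\lambda_n\in\{0,\ldots,2d^{n}\}$; form $\bar F_i^{*}=F_i(x_1,\ldots,x_n,\sum_j\lambda_jx_j)$ by substituting the linear form $\sum_j\lambda_jx_j$ (with coefficients of bitsize $\O{n\log d}$) for $x_{n+1}$, using Kronecker substitution and fast multiplication exactly as in the proof of Lemma~\ref{lem:direction}; compute $\RR(\bar F_1^{*},\ldots,\bar F_n^{*})$ as a ratio of Macaulay determinants (or via Canny's Generalized Characteristic Polynomial if the relevant submatrix is singular), using Storjohann's Las Vegas determinant algorithm as in Proposition~\ref{complexity:resultant}; and test whether this value is nonzero, repeating on failure. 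By the previous paragraph the success probability per round is at least $1/2$, so the expected number of rounds is $\O{1}$; and since the determinant routine is Las Vegas the computed resultant is exact, the test never errs, and the returned $\lambda$ is guaranteed valid --- the procedure is Las Vegas. Once a good $\lambda$ is fixed, the full polynomials $F_i^{*}=F_i(x_1,\ldots,x_n,x_{n+1}+\sum_j\lambda_jx_j)$ are computed by the same substitution technique. For the cost, the substitutions contribute per round $\softO{d^{3}(nd+\tau)(2d)^{n}}$ bit operations (the bound from the proof of Lemma~\ref{lem:direction} with $\mu=\O{n\log d}$), which is dominated by \eqref{mainbound2}; and $\RR(\bar F_1^{*},\ldots,\bar F_n^{*})$ is a single Macaulay-determinant computation of the kind analysed in Proposition~\ref{complexity:resultant}, with $\tau$ replaced by $\softO{nd+\tau}$ and without the interpolation over $B$ evaluation points, hence of cost at most $\softO{(nd)^{(n-1)(\omega+1)}(nd+\tau)}$, again dominated by \eqref{mainbound2}. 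Multiplying by the $\O{1}$ expected number of rounds leaves the expected bit complexity within \eqref{mainbound2}.

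The main obstacle is the first, geometric reduction: one must verify carefully that the substitution $x_{n+1}\mapsto x_{n+1}+\sum_j\lambda_jx_j$ turns a zero of the $\bar F_i^{*}$ at infinity into a projective solution $P_m$ of \eqref{def:systemp} lying off the hyperplane $\{x_{n+1}=\sum_j\lambda_jx_j\}$ --- in particular handling the degenerate point $[0:\cdots:0:1]$ --- so that the bad event is exactly the vanishing of one of the $h_m$. After that, both the union-bound (Schwartz--Zippel) estimate and the complexity accounting are routine given Lemma~\ref{lem:direction} and Proposition~\ref{complexity:resultant}.
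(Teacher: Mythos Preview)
Your proposal is correct and follows essentially the same route as the paper: identify solutions at infinity of the transformed system with projective solutions $P_m$ of the original system lying on the hyperplane $x_{n+1}=\sum_j\lambda_jx_j$, apply a union bound over the at most $d^n$ such $P_m$ (handling $[0:\cdots:0:1]$ separately), and certify a random $\lambda$ by computing $\RR(\bar F_1^{*},\ldots,\bar F_n^{*})$ with the cost estimates from Lemma~\ref{lem:direction} and Proposition~\ref{complexity:resultant}. One small slip in your closing paragraph: the phrase ``lying off the hyperplane'' should read ``lying on the hyperplane''---the bad event is precisely that some $P_m$ lies \emph{on} $\{x_{n+1}=\sum_j\lambda_jx_j\}$, as you state correctly in the body of the argument.
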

\begin{proof}
 Let $(x_1,\ldots,x_n,x_{n+1})\in\PP^n$ be an arbitrary non-trivial solution of (\ref{def:systemp}). 
 
If $(0,\ldots,0,1)$ is a solution of (\ref{def:systemp}), then this solution is again mapped to $(0,\ldots,0,1)$ via the coordinate transformation $x_{n+1}+\lambda_1 x_1+\cdots \lambda_n x_n=0$ no matter how we choose the $\lambda_i$'s. Hence, we may assume that there exists a $j\neq n+1$ with $x_j\neq 0$. Then, after fixing $\lambda_i$ for $i\neq j$, there exists at most one value for $\lambda_j$ with $x_{n+1}+\lambda_1 x_1+\cdots \lambda_n x_n=0$. Thus, with probability at least $1-1/(2d^n)$, the solution is mapped to a finite solution of (\ref{def:systempstar}). Since the total number of solutions is bounded by the B\'ezout number $B\le d^n$, the first claim follows.

For the second claim, notice that, after choosing $\lambda_i$'s at random, we can first compute the polynomials $F_i^*$ 
and then compute the resultant $\RR(\bar{F_1^*},\ldots,\bar{F_n^*})$ in order to check whether there is a solution at infinity. Similar as in the proof of Lemma~\ref{lem:direction}, we can bound the cost for computing the polynomials $F_i$ by $\softO{d^3(d\cdot n+\tau)(2d)^n}$ bit operations. Each $F_i^*$ has magnitude $(d,\softO{dn+\tau})$, and thus computing the resultant needs $\softO{(dn)^{\omega\cdot(n-1)}(dn+\tau)}$ bit operations; see also the proof of Proposition~\ref{complexity:resultant}.
\end{proof}

Using this Lemma, we can first transform (\ref{def:system}) into a system (\ref{def:systempstar}) without roots at infinity. Then, we can compute all solutions of (\ref{def:systempstar}) and recover the solutions of (\ref{def:system}) via the backward transformation $x_{n+1}\mapsto x_{n+1}-\sum_{i=1}^n \lambda_i\cdot x_i$. That is, each solution $\mathbf{x^*}=(x_1^*,\ldots,x_n^*,1)\in\mathcal{S}^*$ 
of (\ref{def:systempstar}) maps to a solution $\mathbf{x}=(x_1,\ldots,x_n,x_{n+1})=(x_1^*,\ldots,x_n^*,x_{n+1}^*-\sum_{i=1}^n \lambda_i x_i^*)$ of the initial system. Using only approximate computation, we cannot directly show that $x_{n+1}=x_{n+1}^*-\sum_{i=1}^n \lambda_i x_i^*$ is equal to zero, and thus a solution $\mathbf{x}$ at infinity cannot directly be verified as such. However, by increasing the precision, we either obtain that $x_n\neq 0$ or we may conclude that $|x_i/x_n|$ is larger than the bound from Corollary~\ref{cor:main} (b) on the absolute value of a solution of our system. In the first case, $\mathbf{x}$ is a finite solution, whereas in the second case, $\mathbf{x}$ is an infinite solution. 
A simple analysis of this approach yields the following result.
\begin{theorem}\label{thm:final}
There exists a Las-Vegas algorithm to compute isolating regions of all solutions of (\ref{def:system}) whose cost in expectation is bounded by (\ref{mainbound2}).
\end{theorem}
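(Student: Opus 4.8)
The plan is to reduce to the situation already handled in Theorem~\ref{thm:approx_noinf} by means of the coordinate change supplied by Lemma~\ref{lem:remove_infinity}, and then to post-process the resulting approximate solutions in order to dehomogenize and to discard the points stemming from solutions at infinity.

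\textbf{Step 1: removing infinity.} First I would repeatedly run the Las-Vegas procedure of Lemma~\ref{lem:remove_infinity}: sample $\lambda_1,\ldots,\lambda_n\le 2d^n$, form the polynomials $F_i^*$, and test whether $\RR(\bar{F_1^*},\ldots,\bar{F_n^*})\neq 0$; if the test fails, resample. Each trial succeeds with probability at least $1/2$, so after $\O{1}$ trials in expectation we obtain a transformed system \eqref{def:systempstar} with no solution at infinity, at expected cost within \eqref{mainbound2}. Writing $f_i^*(x_1,\ldots,x_n):=F_i^*(x_1,\ldots,x_n,1)$, its finite solution set $\mathcal{S}^*\subset\C^n$ is in bijection with the full projective solution set of the homogenized original system via $(x_1^*,\ldots,x_n^*)\mapsto(x_1^*:\cdots:x_n^*:1-\sum_i\lambda_i x_i^*)$; in particular the hypotheses of Theorem~\ref{thm:approx_noinf} are inherited by $f_1^*=\cdots=f_n^*=0$, since a linear change of the last homogeneous coordinate does not alter the projective solution set, and the accompanying SLF machinery is available by the corollary preceding the subsection ``Computing the Solutions''.

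\textbf{Step 2: approximating, dehomogenizing and pruning.} Apply Theorem~\ref{thm:approx_noinf} to $f_1^*=\cdots=f_n^*=0$ to obtain isolating regions of absolute quality $\rho$ for all points of $\mathcal{S}^*$, where $\rho$ is chosen a constant above the logarithmic separation bound of $\mathcal{S}$, above the root bound $\Gamma=\max_i\log(1+\|R^{x_i}\|_\infty)$ from Corollary~\ref{cor:main}~(c), and above $\log(1/|x_{n+1}|)$ for every finite solution. All three quantities are controlled by the magnitude $(d^n,\softO{(nd)^{n-1}(nd+\tau)})$ of the occurring elimination polynomials, hence are of the form $\softO{(nd)^{n-1}d^n(nd+\tau)}$ by Proposition~\ref{importantbounds}; for such $\rho$ the additive term $nd^n\rho$ in Theorem~\ref{thm:approx_noinf} is dominated by \eqref{mainbound2} (using $\omega\ge 2$). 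For each point I then compute an interval enclosure of $x_{n+1}=1-\sum_i\lambda_i x_i^*$ from the approximations of the $x_i^*$, refining the precision geometrically. If the enclosure excludes $0$, the point is a genuine finite solution, and dividing the enclosures of the $x_i^*$ by it yields an isolating region for $(x_1^*/x_{n+1},\ldots,x_n^*/x_{n+1})\in\mathcal{S}$; if instead some $|x_i^*/x_{n+1}|$ is certified to exceed $2^{\Gamma}$, then by Corollary~\ref{cor:main}~(c) the point cannot be a finite solution and is discarded. Exactly one of the two events occurs after finitely many refinements, because a genuinely infinite solution has $x_{n+1}=0$ exactly (its would-be dehomogenization blows up), whereas a finite one has $x_{n+1}\neq 0$ and dehomogenizes inside the ball of radius $2^{\Gamma}$. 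The precision needed to decide and isolate any single point is again $\softO{(nd)^{n-1}d^n(nd+\tau)}$, and there are at most $d^n$ points, so the cost of this step also stays within \eqref{mainbound2}.

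\textbf{Main obstacle.} The delicate part is the complexity bookkeeping of Step~2: one must pin down, purely in terms of $n,d,\tau$, how much precision suffices (i) to separate the finite solutions of \eqref{def:system} after dividing by $x_{n+1}$, and (ii) to decide finiteness of each candidate, and then verify that $nd^n$ times this precision does not exceed \eqref{mainbound2}. The cleanest route is an amortized estimate in the spirit of Proposition~\ref{importantbounds}~(b)--(c) --- summing the relevant $B_{\cdot}$-quantities over all solutions rather than bounding each one individually --- but even the crude per-solution bound works here, thanks to the slack between $d^{2n}$ and $d^{(\omega+2)n-\omega-1}$. A secondary subtlety is that, after the coordinate change, one should make sure both conditions of Lemma~\ref{lem:conditions} still hold for the transformed systems used to recover $\mathcal{S}^*$, so that the corresponding hidden-variable resultants are strong; this is arranged by combining Lemma~\ref{lem:remove_infinity} with Lemma~\ref{lem:direction} exactly as in the preceding subsection.
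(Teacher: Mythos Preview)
Your proposal is correct and follows essentially the same route as the paper's own proof: remove the points at infinity via Lemma~\ref{lem:remove_infinity}, approximate the solutions $\mathcal{S}^*$ of the transformed system via Theorem~\ref{thm:approx_noinf}, back-transform, and use the root bound $\Gamma$ of Corollary~\ref{cor:main}~(c) to decide for each candidate whether $x_{n+1}$ vanishes (discard) or not (divide and output). The paper is terser on the precision bookkeeping---it fixes a single precision $L>2\Gamma+\log n+n\log d=\softO{(nd)^{n-1}\tau}$ up front rather than refining geometrically, and does not spell out the separation argument you flag under ``Main obstacle''---but the content and the resulting bound are the same.
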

\begin{proof}
By Lemma~\ref{lem:remove_infinity}, we can first transform our input system into a system (\ref{def:systempstar}) whose solutions are all 
finite, and then use Theorem~\ref{thm:approx_noinf} to compute the set $\mathcal{S}^*$ of its solutions up to an absolute error of $2^{-L}$ in each coordinate for some $L$. Call these approximations $(\tilde{x}_1^*,\ldots,\tilde{x}_n^*,1)$.

By definition, the transformation $x_{n+1}\mapsto x_{n+1}-\sum_{i=1}^n \lambda_i\cdot x_i$ maps each solution $(x_1^*,\ldots,x_n^*,1)\in\mathcal{S}^*$ 
of the transformed system to a solution 
\[
(x_1,\ldots,x_n,x_{n+1})=(x_1^*,\ldots,x_n^*,x_{n+1}^*-\sum_{i=1}^n \lambda_i x_i^*)
\] 
of the initial system. 
Applying this transformation to the approximations of $\mathcal{S}^*$ allows for approximating $x_i$ to an absolute error less than $2^{-L'}$, where $L'\le L-\log n-n\log d$. If 
$|\tilde{x}_{n+1}|>2^{-L'}$, where $\tilde{x_i}$ is the corresponding approximation of $x_i$, we may conclude that $x_{n+1}\neq 0$, and thus $(x_1^*/x^*_{n+1},\ldots,x^*_{n}/x^*_{n+1})$ is a solution of (\ref{def:system}). 

It might happen that the backward coordinate transformation sends a solution of \eqref{def:systempstar} to an infinite solution of \eqref{def:system}.
Namely, this is the case if and only if $x_{n+1}^*-\sum_{i=1}^n \lambda_i x_i^*=0$. Of course, this test for equality cannot be done directly using approximate arithmetic.
However, if $x_{n+1}=x_{n+1}^*-\sum_{i=1}^n \lambda_i x_i^*\neq 0$, then $(x_1^*/x^*_{n+1},\ldots,x^*_{n}/x^*_{n+1})$ is a solution of (\ref{def:system}). By part (b) of Corollary~\ref{cor:main}, we either have $x_i^*/x_{n+1}^*=0$ or $2^{-\Gamma}<|x_i^*/x_{n+1}^*|< 2^{\Gamma}$ with some $\Gamma$ of size $\softO{(nd)^{n-1}\tau}$.  
Hence, choosing $L$ large enough, that is, $L>2\Gamma+\log n+n\log d=\softO{(nd)^{n-1}\tau}$, we either have $|\tilde{x}_{n+1}|>2^{-L'}$, or we may conclude that $x_{n+1}^*-\sum_{i=1}^n \lambda_i x_i^*=0$.

The claim of the number of bit operations of this procedure follows directly from Lemma~\ref{lem:remove_infinity} and Theorem~\ref{thm:approx_noinf} together with the bound on $\Gamma$.
\end{proof}

\bibliographystyle{plain}
\bibliography{sep_form,bibliography}

\newcommand{\longtitle}[1]{#1}
\begin{thebibliography}{10}

\bibitem{Alonso1996}
M.-E. Alonso, E.~Becker, M.~F. Roy, and T.~W{\"o}rmann.
\newblock {\em Algorithms in Algebraic Geometry and Applications}, chapter
  Zeros, multiplicities, and idempotents for zero-dimensional systems, pages
  1--15.
\newblock Birkh{\"a}user Basel, Basel, 1996.

\bibitem{DBLP:journals/corr/BeckerS0Y15}
Ruben Becker, Michael Sagraloff, Vikram Sharma, and Chee{-}Keng Yap.
\newblock A simple near-optimal subdivision algorithm for complex root
  isolation based on the pellet test and newton iteration.
\newblock {\em CoRR}, abs/1509.06231, 2015.

\bibitem{bouzidi:hal-01114767}
Yacine Bouzidi, Sylvain Lazard, Guillaume Moroz, Marc Pouget, Fabrice
  Rouillier, and Michael Sagraloff.
\newblock {Improved algorithms for solving bivariate systems via Rational
  Univariate Representations}.
\newblock Research report, {Inria}, June 2015.

\bibitem{Canny:1989:SSN:74540.74556}
J.~F. Canny, E.~Kaltofen, and L.~Yagati.
\newblock Solving systems of nonlinear polynomial equations faster.
\newblock In {\em ISSAC}, pages 121--128, 1989.

\bibitem{DBLP:journals/jsc/Canny90}
John~F. Canny.
\newblock Generalised characteristic polynomials.
\newblock {\em J. Symb. Comput.}, 9(3):241--250, 1990.

\bibitem{DBLP:journals/jsc/ChengGG12}
Jin{-}San Cheng, Xiao{-}Shan Gao, and Leilei Guo.
\newblock Root isolation of zero-dimensional polynomial systems with linear
  univariate representation.
\newblock {\em J. Symb. Comput.}, 47(7):843--858, 2012.

\bibitem{cox2005using}
D.A. Cox, J.~Little, and D.~O'Shea.
\newblock {\em Using Algebraic Geometry}.
\newblock Graduate Texts in Mathematics. Springer New York, 2005.

\bibitem{cattani2005solving}
Alicia Dickenstein and Ioannis~Z. Emiris.
\newblock {\em Solving Polynomial Equations: Foundations, Algorithms, and
  Applications}.
\newblock Springer Publishing Company, Incorporated, 1st edition, 2010.

\bibitem{Diochnos:2009:APC:1530888.1530918}
Dimitrios~I. Diochnos, Ioannis~Z. Emiris, and Elias~P. Tsigaridas.
\newblock On the asymptotic and practical complexity of solving bivariate
  systems over the reals.
\newblock {\em J. Symb. Comput.}, 44(7):818--835, 2009.

\bibitem{DBLP:conf/issac/EmeliyanenkoS12}
Pavel Emeliyanenko and Michael Sagraloff.
\newblock On the complexity of solving a bivariate polynomial system.
\newblock In {\em ISSAC}, pages 154--161, 2012.

\bibitem{DBLP:conf/issac/EmirisMT10}
Ioannis~Z. Emiris, Bernard Mourrain, and Elias~P. Tsigaridas.
\newblock The {DMM} bound: multivariate (aggregate) separation bounds.
\newblock In {\em ISSAC}, pages 243--250, 2010.

\bibitem{DBLP:journals/jc/EmirisP05}
Ioannis~Z. Emiris and Victor~Y. Pan.
\newblock Improved algorithms for computing determinants and resultants.
\newblock {\em J. Complexity}, 21(1):43--71, 2005.

\bibitem{Faugere1993329}
J.C. Faug{\`{e}}re, P.~Gianni, D.~Lazard, and T.~Mora.
\newblock Efficient computation of zero-dimensional gr{\"o}bner bases by change
  of ordering.
\newblock {\em J. Symb. Comput.}, 16(4):329 -- 344, 1993.

\bibitem{gelfand2009discriminants}
I.M. Gelfand, M.~Kapranov, and A.~Zelevinsky.
\newblock {\em Discriminants, Resultants, and Multidimensional Determinants}.
\newblock Modern Birkh{\"a}user Classics. Birkh{\"a}user Boston, 2009.

\bibitem{gonzalez1996improved}
Laureano Gonz{\'a}lez-Vega et~al.
\newblock An improved upper complexity bound for the topology computation of a
  real algebraic plane curve.
\newblock {\em J. Complexity}, 12(4):527--544, 1996.

\bibitem{hashemi2011sharper}
Amir Hashemi and Daniel Lazard.
\newblock Sharper complexity bounds for zero-dimensional gr{\"o}bner bases and
  polynomial system solving.
\newblock {\em International Journal of Algebra and Computation},
  21(05):703--713, 2011.

\bibitem{Heintz1983239}
Joos Heintz.
\newblock Definability and fast quantifier elimination in algebraically closed
  fields.
\newblock {\em Theor. Comput. Sci.}, 24(3):239 -- 277, 1983.

\bibitem{DBLP:journals/jc/KobelS15}
Alexander Kobel and Michael Sagraloff.
\newblock On the complexity of computing with planar algebraic curves.
\newblock {\em J. Complexity}, 31(2):206--236, 2015.

\bibitem{Lakshman1991}
Y.~N. Lakshman.
\newblock {\em Effective Methods in Algebraic Geometry}, chapter A Single
  Exponential Bound on the Complexity of Computing Gr{\"o}bner Bases of Zero
  Dimensional Ideals, pages 227--234.
\newblock Birkh{\"a}user Boston, Boston, MA, 1991.

\bibitem{lazard198177}
Daniel Lazard.
\newblock Resolution des systemes d'equations algebriques.
\newblock {\em Theor. Comput. Sci.}, 15(1):77 -- 110, 1981.

\bibitem{Lazard1983}
Daniel Lazard.
\newblock Gr{\"{o}}bner-bases, gaussian elimination and resolution of systems
  of algebraic equations.
\newblock In {\em EUROCAL}, pages 146--156, 1983.

\bibitem{LeGall:2014:PTF:2608628.2608664}
Fran\c{c}ois Le~Gall.
\newblock Powers of tensors and fast matrix multiplication.
\newblock In {\em Proceedings of the 39th International Symposium on Symbolic
  and Algebraic Computation}, ISSAC, pages 296--303, New York, NY, USA, 2014.
  ACM.

\bibitem{Mehlhorn:roots}
Kurt Mehlhorn, Michael Sagraloff, and Pengming Wang.
\newblock From approximate factorization to root isolation with application to
  cylindrical algebraic decomposition.
\newblock {\em J. Symb. Comput.}, 66:34 -- 69, 2015.

\bibitem{mourrain2003accelerated}
Bernard Mourrain, Victor~Y Pan, and Olivier Ruatta.
\newblock Accelerated solution of multivariate polynomial systems of equations.
\newblock {\em SIAM J. Comput.}, 32(2):435--454, 2003.

\bibitem{DBLP:journals/jsc/Pan02}
Victor~Y. Pan.
\newblock Univariate polynomials: Nearly optimal algorithms for numerical
  factorization and root-finding.
\newblock {\em J. Symb. Comput.}, 33(5):701--733, 2002.

\bibitem{renegar1989worst}
James Renegar.
\newblock On the worst-case arithmetic complexity of approximating zeros of
  systems of polynomials.
\newblock {\em SIAM J. Comput.}, 18(2):350--370, 1989.

\bibitem{RouillierMultivariate}
Fabrice Rouillier.
\newblock Solving zero-dimensional systems through the rational univariate
  representation.
\newblock {\em Applicable Algebra in Engineering, Comiication and Computing},
  9(5):433--461.

\bibitem{DBLP:journals/jc/Storjohann05}
Arne Storjohann.
\newblock The shifted number system for fast linear algebra on integer
  matrices.
\newblock {\em J. Complexity}, 21(4):609--650, 2005.

\bibitem{sturmfels2002solving}
Bernd Sturmfels.
\newblock {\em Solving systems of polynomial equations}.
\newblock Number~97. American Mathematical Soc., 2002.

\bibitem{vzGG13}
Joachim von~zur Gathen and J\"urgen Gerhard.
\newblock {\em Modern Computer Algebra}.
\newblock Cambridge University Press, Cambridge, UK, \nth{3} edition, 2013.

\end{thebibliography}
\end{document}